  \def\firstcircle{(90:1.75cm) circle (2.5cm)}
  \def\secondcircle{(210:1.75cm) circle (2.5cm)}
  \def\thirdcircle{(330:1.75cm) circle (2.5cm)}
\newtheorem{theorem}{Theorem}[section]
\newtheorem{lemma}[theorem]{Lemma}
\newtheorem{definition}[theorem]{Definition}
\newtheorem{corollary}[theorem]{Corollary}
\newtheorem{remark}[theorem]{Remark}
\newcommand*{\bra}[1]{\langle#1|}
\newcommand*{\ket}[1]{|#1\rangle}
\newcommand*{\opro}[2]{|#1\rangle\langle#2|}
\DeclareMathOperator{\Tr}{Tr}
\newcommand{\E}{\mathop{\mathbb{E}}} 
\newcommand{\A}{\mathcal{A}} 
\newcommand{\calF}{\mathcal{F}}
\newcommand{\pmz}{\pmb{z}}
\newcommand{\prim}[1]{\textsf{#1}} %
\newcommand{\prs}{\prim{PRS}} %
\newcommand{\efi}{\prim{EFI}} %
\newcommand{\efid}{\prim{EFID}} %
\newcommand{\qefid}{\prim{QEFID}}
\newcommand{\efig}{G} %
\newcommand{\bit}{\ensuremath{\{0,1\}}} %
\newcommand{\qvqa}{\prim{QVQA}} %
\newcommand{\uqvqa}{\prim{UQVQA}} %
\newcommand{\dqvqa}{\prim{DQVQA}} %
\newcommand{\vqa}{\prim{VQA}} %
\newcommand{\dvqa}{\prim{DVQA}}
\newcommand{\uvqa}{\prim{UVQA}}
\newcommand{\circf}{\mathfrak{C}} 
\newcommand{\circfgood}{\circf_{\text{good}}} 
\newcommand{\veps}{\ensuremath{\varepsilon}}
\newcommand{\cdistr}{{D}} 
\newcommand{\distrfam}{\mathfrak{D}} 
\newcommand{\cdistrs}{s} 
\newcommand{\sampler}{\mathcal{S}} 
\newcommand{\sample}{\pmz} 
\newcommand{\cnot}{\mathbf{CNOT}} 
\newcommand{\id}{\ensuremath{\mathbf{1}}} %
\newcommand{\setup}{\mathsf{Setup}}
\newcommand{\pp}{\mathsf{pp}}
\newcommand{\vk}{\mathsf{vk}}
\newcommand{\simulate}{\mathsf{Sim}}
\newcommand{\poq}{\prim{PoQ}}
\newcommand{\sevqa}{\prim{SE-VQA}}
\newcommand{\seuvqa}{\prim{SE-UVQA}}
\newcommand{\seqvqa}{\prim{SE-QVQA}}
\newcommand{\seuqvqa}{\prim{SE-UQVQA}}
\newcommand{\sampmcsp}{\prim{SampMCSP}}
\newcommand{\obsampmcsp}{\prim{ObSampMCSP}}
\newcommand{\mcsp}{\prim{MCSP}}
\DeclareMathOperator{\poly}{\mathsf{poly}}
\DeclareMathOperator{\negl}{\mathsf{negl}}
\newcommand{\class}[1]{\mathsf{#1}}
\DeclarePairedDelimiter\abs{\lvert}{\rvert}
\DeclarePairedDelimiter\norm{\lVert}{\rVert}
\newcommand{\secpar}{\lambda} %
\renewcommand{\S}{\mathcal{S}} %
\renewcommand{\H}{\mathcal{H}} %
\newcommand{\K}{\mathcal{K}} %
\crefname{lemma}{Lemma}{Lemmas}
\crefname{proposition}{Proposition}{Propositions}
\crefname{definition}{Definition}{Definitions}
\crefname{theorem}{Theorem}{Theorems}
\crefname{conjecture}{Conjecture}{Conjectures}
\crefname{corollary}{Corollary}{Corollaries}
\crefname{claim}{Claim}{Claims}
\crefname{section}{Section}{Sections}
\crefname{appendix}{Appendix}{Appendices}
\crefname{figure}{Fig.}{Figs.}
\crefname{table}{Table}{Tables}
\title{A Cryptographic Perspective on the Verifiability of Quantum Advantage}
\author{Nai-hui Chia 
\thanks{Rice University, USA. Email:\href{mailto:nc67@rice.edu}{nc67@rice.edu}}
\and Honghao Fu 
\thanks{MIT, USA, and Concordia University, Canada. Email:\href{mailto:honghao.fu@concordia.ca}{honghao.fu@concordia.ca}}
\and Fang Song 
\thanks{Portland State University, USA. Email:\href{mailto:fang.song@pdx.edu}{fang.song@pdx.edu}}
\and Penghui Yao
\thanks{Nanjing University and Hefei National Laboratory, China. \href{mailto:phyao1985@gmail.com}{phyao1985@gmail.com}}}
\date{}
\begin{document}

\maketitle
\begin{abstract}
  In recent years, achieving verifiable quantum advantage on a NISQ device has emerged as an important open problem in quantum information.
The sampling-based quantum advantages are not known to have efficient verification methods. This paper investigates the verification of quantum advantage from a cryptographic perspective.  We establish a strong connection between the verifiability of quantum advantage and cryptographic and complexity primitives, including efficiently samplable, statistically far but computationally indistinguishable pairs of (mixed) quantum states (\prim{EFI}), pseudorandom states (\prim{PRS}), and variants of minimum circuit size problems ($\mcsp$). Specifically, we prove that a) a sampling-based quantum advantage is either verifiable or can be used to build \prim{EFI} and even \prim{PRS} and b) polynomial-time algorithms for a variant of $\mcsp$ would imply efficient verification of quantum advantages. 
Our work shows that the quest for verifiable quantum advantages may lead to applications of quantum cryptography, and the construction of quantum primitives can provide new insights into the verifiability of quantum advantages.
\end{abstract}

\section{Introduction}

Quantum advantage experiments aim to demonstrate tasks that quantum
computers outperform classical computers. In recent years, random
circuit sampling (RCS)~\cite{boixo2018characterizing} and Boson
sampling~\cite{boson} emerge as promising proposals since they can be
implemented on a NISQ (Noisy Intermediate-Scale Quantum) device and
admit \emph{provably} complexity-theoretical evidence for the hardness
on classical computers~\cite{10.5555/3135595.3135617}. Besides these two desirable
criteria for a quantum advantage experiment, another critical
criterion is the ability to \emph{verify} the outcomes from such
experiments, preferably by an efficient classical
computer. Verification for RCS and Boson sampling both turn out to be
challenging. At present, it remains open to demonstrate a quantum
advantage experiment that satisfies all three of these criteria (see
~\cref{fig:vqa} for a summary).
\begin{figure}[H]
    \centering
    \resizebox{.4\textwidth}{!}{%
    \begin{tikzpicture}
        \begin{scope}
        \clip \firstcircle;
        \fill[yellow] \secondcircle;
        \end{scope}
      \begin{scope}
        \clip \secondcircle;
        \fill[green] \thirdcircle;
      \end{scope}
      \begin{scope}
        \clip \firstcircle;
        \fill[cyan] \thirdcircle;
      \end{scope}
      \begin{scope}
          \clip \firstcircle;
          \clip \secondcircle;
          \clip \thirdcircle;
          \fill[magenta] \thirdcircle;
      \end{scope}
      \draw \firstcircle node[text=violet,above,font=\bfseries] {\large NISQable};
      \draw \secondcircle node [text=violet,below left,align=center,font=\bfseries] {\large Classically\\ \large hard};
      \draw \thirdcircle node [text=violet,below right,align=center,font=\bfseries] {\large Efficiently\\ \large verifiable};
      \node[text width=1cm,fill=white,font=\bfseries] at (0,0) {\large Ideal};
      \node[text width=0.9cm, align=center] at (0, -1.5) {PoQ\\Shor...};
        \node[text width=0.9cm, align=center] at (1.2, 0.9) {QAOA \\ VQE...};
        \node[text width=1cm, align=center] at (-1.4, 0.9) {RCS\\Boson...};
      \end{tikzpicture}
      }
      \caption{Scott Aaronson's categorization of quantum advantage
        proposals. Random circuit sampling
        \cite{boixo2018characterizing} and Boson sampling \cite{boson}
        are NISQable and Classically hard. Cryptographic proof of
        quantumness (PoQ)
        \cite{mahadevPoQ,computationalCHSH} and Shor's
        algorithm \cite{shor} are classically hard and efficiently
        verifiable.  QAOA \cite{qaoa} and VQE \cite{vqe} are NISQable
        and efficiently verifiable.}
      \label{fig:vqa}
\end{figure}

About the verifiability of RCS, the linear cross-entropy benchmarking
(XEB) is first proposed as a verification method
\cite{boixo2018characterizing}.  However, XEB is sample efficient but
not computationally efficient, and it can be spoofed
\cite{gao2021limitations,pan2022simulation}.  More generally, a work
by Hangleiter et. al~\cite{hangleiter2019sample} cast a further shadow
on their verifiability.  They show that if the target distribution
anticoncentrates, certifying closeness to the target distribution
requires exponentially many samples, which covers RCS, Boson sampling
and IQP sampling.  This result rules out efficient verification for
the known quantum advantage experiments based on sampling.

What about general quantum sampling experiments?  How do we determine
if such an experiment has an efficient verification method? 
In \cite{francca2022game}, the verification task is modelled as a game
between a quantum party and a classical challenger, which we will
discuss more later.  
However, they
can only show the limitations of the verification methods that
calculate the empirical average of some scoring functions of individual samples in this model.

\subsection{Our results}
In this paper, we investigate the verifiability of sampling-based quantum
advantage experiments via a \emph{cryptographic} perspective.  To this
end, we first put forth formal definitions of
verifiability. 
Subsequently, we study the implication of the hardness of a variant of the minimal circuit size problem ($\mcsp$) on verifiability.
Furthermore, we establish the connection between
verifiability and fundamental quantum cryptographic primitives: $\efi$
(efficiently generated, statistically far, and computationally
indistinguishable states)
\footnote{More specifically, $\qefid$. Since $\qefid$ is $\efi$, we use $\efi$ in this section for simplicity.}
and $\prs$ (pseudorandom states).  
Lastly, we generalize verifiable quantum advantage to capture the verifiability of interactive proof of quantumness. 
We hope that our work will advance the understanding of the verifiability of
quantum advantage experiments and provide insights into the
development of future quantum advantage experiments.

\begin{figure}[H]
   \centering
\begin{tikzpicture}
    \tikzstyle{operator} = [draw,rounded rectangle,fill=white,minimum size=3em] 
    \tikzstyle{device} = [draw,rounded rectangle,fill=white,minimum size=4.5em] 
    \tikzstyle{phase} = [fill,shape=circle,minimum size=5pt,inner sep=0pt]
    \tikzstyle{surround} = [fill=white,thick,draw=black,rounded corners=2mm]
    \tikzset{edge/.style = {->,> = latex'}}
    \node[operator] (v1) at (2, 0) {Verifier with $\circf$};
    \node[operator] (p1) at (0, -2) {Quantum Alice};
    \node[operator] (p2) at (4, -2) {Skeptic Bob};
    
    \path [->] 
        ([xshift=1.7mm]p1.north) 
        edge node [above=2mm, left=0.5mm,align=center] {($C\in\circf$, $\sample_C$)}  ([yshift=-2.3mm]v1.west);
    
    \path [->] 
        ([xshift=-1.7mm]p2.north) 
        edge node [above=1.5mm, right=0.5mm,align=center] {($S_{C}$, $\sample_{\cdistr_C}$)}  ([yshift=-2mm]v1.east);
    \path [->] 
        (p1.east) 
        edge node [above=0.5mm,align=center] {$C$}  (p2.west);

\end{tikzpicture}
 \caption{Verification process for RCS: 
 The verifier publishes a circuit family $\circf$.
Then Alice sends $C \in \circf$ and samples $\sample_C$ obtained from measuring $C\ket{0^n}$ to the verifier,
and sends $C$ to Bob.
Bob
sends the description of the sampler $\sampler_{C}$ for his classically samplable spoofing distribution $\cdistr_C$ which depends on $C$, along with samples $\sample_{\cdistr_C}$ to the verifier.}
      \label{fig:veri}
\end{figure}

The model of the verification process is depicted in \cref{fig:veri}. It consists of three parties: Alice (a quantum advocate and experiment designer), Bob (a
quantum skeptic) and a verifier~\footnote{We came up with this model
  unaware of the two-party game proposed in \cite{francca2022game},
  albeit the two models share some similarities.
  }.
Alice runs the quantum experiment and sends transcripts of her
experiment, including the setup of the experiment apparatus and
outcomes, to the verifier.  
She also tells Bob about her experiment apparatus but not the outcomes.
Bob proposes a classically
samplable distribution that depends on Alice's experiment and is indistinguishable from Alice's
distribution, and sends the description of his sampling algorithm
along with samples of his distribution to the verifier.  The
verifier's goal is to distinguish Alice and Bob's samples, so in the
rest of the paper we also call him the distinguisher.  The
distinguisher takes all the information from Alice and Bob as input.
In the case of RCS, Alice publishes her random circuit $C$ and her
measurement outcome on $C\ket{0^n}$.  Bob proposes a spoofing
algorithm, which may depend on $C$, and sends the description of the algorithm along with his
samples to the distinguisher.

\begin{definition}[Verifiable quantum advantage (Informal)]
    \label{def:informal_vqa}
    Let $\circf$ be a set of polynomial-sized quantum circuits on $n$ qubits.
    We say the experiment that samples a $C \in \circf$ and repeatedly measures the output state in the computational basis achieves \emph{verifiable quantum advantage} 
    if for all $\distrfam = \{\cdistr_C,\sampler_C\}_{C\in \circf}$ where $\sampler_C$ is a time-$\cdistrs$ classical sampler for $\cdistr_C$, there exists a classical polynomial time distinguisher $\A$ such that 
    \begin{align*}
    \E_{C \gets \circf}|\Pr[\A(C,\sampler_{C}, \sample_{C}) = 1] - \Pr[\A_D(C,\sampler_{C}, \sample_{\cdistr_C}) = 1]|\geq 1/\poly(n), 
    \end{align*}
    where $\sample_{C}$ is a polynomial-sized set of samples generated from measuring $C\ket{0^n}$ in the computational basis, and $\sample_{\cdistr_C}$ is a set of samples drawn from $\cdistr_C$. 
\end{definition}

We give several $\vqa$ examples to demonstrate the expressiveness of
our verifiability definition, such as Fourier sampling (e.g., based on
Shor's algorithm and Simon's problem). Note that our distinguisher is
more general than the ones used in the experiments
\cite{boixo2018characterizing}, studied in \cite{francca2022game}, and the verification algorithms of Fourier sampling problems, because our distinguisher can take the spoofing algorithm $\sampler_C$ as input.
For the sampling-based quantum advantage experiments such as RCS, 
their distinguishers are agnostic about how the classical samples are
sampled. Those distinguishers score each sample individually, and make their decisions
based on the average of the scores.  
As pointed out in
\cite{francca2022game}, if the distinguisher knows the spoofing
algorithm of XEB proposed in \cite{pan2022simulation}, the
distinguisher can distinguish the spoofing samples from the quantum
samples.  
For the Fourier sampling problems, the verification algorithms, i.e. the distinguisher, only depend on the problems themselves.
The distinguisher from our definition can capture these cases by ignoring the description of the classical sampler and deciding by some computation on the samples. 
Hence, we define verifiable quantum advantage with respect to
such a more general distinguisher.

\bigskip

\noindent \textbf{Minimum circuit size problem ($\prim{MCSP}$) vs. $\vqa$}. We aim to identify the computational hardness of verifying quantum advantages. One potential approach is finding a problem for which the existence of efficient algorithms would lead to efficient verification, which is similar to the connections between Meta-complexity problems and cryptography. 

Classical meta-complexity problems, which ask to identify specific complexity measures (e.g., circuit complexity) of given Boolean functions, is a fundamental topic in complexity theory. It is worth noting that efficient algorithms for these problems imply that one-way functions do not exist~\cite{kabanets2000circuit,razborov1994natural}. Chia et al.~\cite{chia2021quantum} investigated quantum minimum circuit size problems ($\mathsf{qMCSP}$) by considering the hardness of identifying quantum circuit complexity of functions, states, and unitary matrices. They showed that the existence of efficient algorithms leads to efficient algorithms for breaking all pseudorandom state schemes and post-quantum one-way functions.   

Inspired by the connections between meta-complexity problems and cryptography, we introduce a variant of meta-complexity problems called the \emph{minimum circuit size problems for samples} ($\sampmcsp$), which asks the minimum size of classical samplers that can generate samples indistinguishable from the given samples. This problem is analogous to the state minimum circuit size problem introduced in~\cite{chia2021quantum}, which asks to identify the quantum circuit complexity of given quantum states. We demonstrate that if $\sampmcsp$ can be solved in polynomial time, then a class of quantum advantage experiments can be verified efficiently. 

\bigskip

\noindent \textbf{$\efi$ vs. $\vqa$}. 
 Next, we study the relationships between
verifiability and the quantum cryptographic primitive \prim{EFI}.  \prim{EFI} is a fundamental quantum cryptographic
primitive, which is equivalent to quantum commitment schemes, quantum
oblivious transfer, quantum multiparty computation, and others
\cite{BCQ23}. 
Note that, classically, one-way functions are necessary but might not
be sufficient to build these applications. 

We show a \emph{duality}
between \prim{EFI} and verifiable quantum advantage. First, we consider
classically secure \prim{EFI} pairs whose computational
indistinguishability holds only against classical algorithms.
\begin{theorem}[Informal]
  \label{thm:informal_efi}
  Suppose that a quantum experiment admits quantum advantage. Then, the experiment is verifiable if and only if there exists a sufficiently large faction of the circuits' output states that do not form an $\efi$ pair with any quantum state that encodes a classical samplable distribution.
\end{theorem}

If we allow a quantum advantage to be verified by a quantum computer, we
obtain a similar duality between quantum-secure \prim{EFI}s and
quantum verifiability. This model with quantum verifiers
is also worth exploring and is discussed in more detail in~\cref{sec:qvqa}.

These results provide necessary and sufficient conditions for
verifiability based on whether the quantum circuit family can form
$\efi$ pairs with classical polynomial-time samplable distributions,
respectively. To the best of our knowledge, all existing $\efi$ pairs
satisfy such a property, i.e., one of the $\efi$ generators can be
simulated by classical polynomial-time algorithms.

\bigskip
\noindent\textbf{Pseudorandom states ($\prs$) vs. $\vqa$}. 
A set of states is \prim{PRS} if a random state in this set is computationally indistinguishable from a Haar random state \cite{JLS18}.
\prim{PRS} is an essential quantum cryptographic primitive that can be used to build other primitives, including one-time digital signatures and \prim{EFI}.
Moreover, the existence of \prim{PRS} implies the existence of
\prim{EFI}, and thus the aforementioned applications that are
equivalent to \prim{EFI} can also be constructed from \prim{PRS}.
Moreover, there is evidence
showing that the existence of \prim{PRS} is a weaker
assumption than the existence of one-way
functions~\cite{kretschmer2021quantum}.

Intuitively, if the output states of a quantum advantage experiment
are pseudorandom, the measurement output distribution should be
indistinguishable from the measurement output distribution of Haar random
states. Moreover, the measurement output distribution of Haar random
states can be approximated by a classical distribution, so this
quantum advantage experiment does not achieve verifiability. However,
in the definition of \prim{PRS}, the distinguisher is unaware of the
preparation circuit of the given state, but the distinguisher in a
quantum advantage experiment is. Hence, we can only prove this result
for a subclass of \prim{PRS}, called classically unidentifiable
\prim{PRS}, which intuitively says that when distinguishing samples
from measuring different states, knowing the circuit does not
help. Many existing \prim{PRS} constructions, such as random phase
states and binary phase states~\cite{JLS18,BS19}, are classically
unidentifiable.

\begin{theorem}[Informal]
    \label{thm:informal_prs}
    If the quantum advantage of a quantum sampling algorithm is verifiable, then the output states are not classically unidentifiable \prim{PRS}. 
\end{theorem}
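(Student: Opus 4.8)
The plan is to prove the contrapositive: assuming the states $C\ket{0^n}$ output by the circuits $C\in\circf$ form a classically unidentifiable $\prs$, I exhibit one classically samplable distribution that fools every efficient distinguisher, so the experiment fails to be verifiable. The spoofing distribution is the trivial one, $\calD$ = ``output $t:=\poly(n)$ independent uniform strings in $\bit^n$'' with $t$ the number of samples Alice reports, whose sampler $\sampler_\calD$ is a fixed constant-size circuit. The core of the argument is a three-hop hybrid that rewrites Alice's sample tuple $\sample_C$ into a tuple $\sample_\calD$ drawn from $\calD$ without ever altering the circuit $C$ that the distinguisher sees.

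First I would record an elementary lemma: for $t=\poly(n)$, the law of $t$ computational-basis measurement outcomes of a Haar-random $n$-qubit state -- call such a tuple $\sample_\psi$ -- is within total-variation distance $O(t^2 2^{-n})=\negl(n)$ of $\calD$. This is the standard collision/exchangeability argument: conditioned on the $t$ outcomes being pairwise distinct the two laws coincide (both uniform over ordered $t$-tuples of distinct strings, the Haar side using that $\E_{\ket\psi\gets\Haar}\prod_{i}\abs{\ipro{a_i}{\psi}}^2$ is the same number for every distinct tuple $(a_1,\dots,a_t)$), while a collision occurs with probability at most $\binom{t}{2}\tfrac{2}{2^n+1}$ on the Haar side and $\binom{t}{2}2^{-n}$ on the $\calD$ side.

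Now fix any classical polynomial-time distinguisher $\A$ and write $f(C)=\Pr[\A(C,\sampler_\calD,\sample_C)=1]$ and $g(C)=\Pr[\A(C,\sampler_\calD,\sample_\calD)=1]$, so its verifiability advantage against $\calD$ is $\E_{C\gets\circf}\abs{f(C)-g(C)}$. I would bound this by a telescoping sum of three negligible terms. \emph{(Unidentifiability.)} A classically unidentifiable $\prs$ is, among other things, a $\prs$ for which the pairs $(C,\sample_C)$ and $(C,\sample_{C'})$ -- with $C,C'\gets\circf$ independent -- are indistinguishable to classical polynomial-time algorithms, advantage averaged over $C,C'$, even given $\sampler_\calD$ as auxiliary input; feeding this task to $\A$ and then moving the average over $C'$ inside the absolute value by Jensen gives $\E_{C}\abs{f(C)-r(C)}=\negl(n)$, where $r(C):=\E_{C'\gets\circf}\Pr[\A(C,\sampler_\calD,\sample_{C'})=1]$. \emph{($\prs$ security.)} For each \emph{fixed} $C$, the algorithm $\mathcal B_C$ that receives $t$ copies of an $n$-qubit state, measures each in the computational basis to get a tuple $\sample$, and outputs $\A(C,\sampler_\calD,\sample)$ is a $\poly(n)$-size quantum distinguisher using $\poly(n)$ copies, so (non-uniform) $\prs$ security gives $\abs{r(C)-\tilde r(C)}=\negl(n)$ with the same bound for all $C\in\circf$, where $\tilde r(C):=\E_{\ket\psi\gets\Haar}\Pr[\A(C,\sampler_\calD,\sample_\psi)=1]$. \emph{(Flatness.)} By the lemma the laws of $\sample_\psi$ and $\sample_\calD$ are $\negl(n)$-close, so $\abs{\tilde r(C)-g(C)}=\negl(n)$ for every $C$. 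Summing, $\E_{C}\abs{f(C)-g(C)}=\negl(n)$; since $\A$ was arbitrary, no efficient classical distinguisher achieves inverse-polynomial advantage against $\calD$, so the experiment is not verifiable.

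The delicate point is the shape mismatch between the verifiability advantage, which has the absolute value \emph{inside} the average over circuits ($\E_C\abs{\,\cdot\,}$), and the security guarantees of $\prs$ and of its classical-unidentifiability, which one is tempted to state as $\abs{\E[\cdot]-\E[\cdot]}$. The ordering of the hybrids above is exactly what reconciles this: the $\prs$ step is applied pointwise in $C$ -- treating $C$ purely as a hard-wired auxiliary input of the distinguisher $\mathcal B_C$ and averaging over the \emph{independent} key $C'$ inside -- so that no $\prs$ key is ever averaged through an absolute value, and $\E_C$ is taken only at the very end; had one instead identified the $\prs$ key with the circuit $C$ that also feeds $\A$, one would need the stronger ``$\E_C\abs{\cdot}$'' form of $\prs$ security, recoverable only via concentration of measure for Haar states plus amplification. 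Relatedly, the unidentifiability hypothesis must itself be taken in the ``averaged advantage'' form used above rather than the naive $\abs{\E[\cdot]-\E[\cdot]}$ form (a classical $\A$ that computes a balanced predicate of the circuit description shows the latter does not suffice); verifying that this stronger form is still met by the standard constructions -- random phase and binary phase states, whose computational-basis outcomes are exactly uniform -- is the remaining thing to check, and there it holds trivially.
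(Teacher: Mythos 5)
Your proposal is correct and follows essentially the same route as the paper's proof of the formal version (Theorem~\ref{thm:rcs_unverifiable}): take the uniform distribution as the spoofing distribution and chain three negligible hops — classical unidentifiability to replace $\sample_C$ by samples from an independent key, $\prs$ security (with $C$ and $\sampler_\calD$ hard-wired) to replace those by Haar measurement outcomes, and the collision/exchangeability lemma (the paper's \cref{lem:conllision}) to replace Haar outcomes by uniform samples. Your closing discussion of where the absolute value sits relative to $\E_C$ is a careful refinement of a point the paper treats more loosely, but it does not change the argument.
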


The motivation behind \cref{thm:informal_prs} is that RCS is proposed as a
candidate construction of $\prs$ \cite{kretschmer2023quantum}.  
If the output states of random circuits are classically unidentifiable,
\cref{thm:informal_prs} gives us a proof that RCS
experiments are unverifiable.  
Note that \cite{hangleiter2019sample} shows that the distribution induced by measuring a random circuit is indistinguishable
from some classical distribution, which does not imply RCS is not $\vqa$ according to \cref{def:informal_vqa}.  
Conversely, \cref{thm:informal_prs} also tells us that if some
construction of $\prs$ fails, it is possible to use this
construction for verifiable quantum advantage.  
This is a win-win situation.

\bigskip
\noindent\textbf{What about interactive quantum advantage experiments?}
So far, we have focused on sampling-based quantum advantage experiments.
There are interactive verifiable quantum advantage proposals called proof of quantumness ($\poq$) \cite{mahadevPoQ,simplePoQ,computationalCHSH}.
These $\poq$s achieve verifiability, but one obstacle in implementing these protocols is maintaining coherence during interactions.

Hence, we generalize \cref{def:informal_vqa} to capture the strength of both \cref{def:informal_vqa} and the verifiability of $\poq$.
In the generalized definition, 
the trusted party is the \emph{designated verifier},
who generates public parameters and a private verification key.
After getting all the samples, the designated verifier uses the verification key to distinguish Alice's quantum samples from Bob's samples.
We call this \emph{Designated verifiable quantum advantage} or $\dvqa$.

Under this definition, the trusted verifier is offline, so Alice does not need to interact with the trusted verifier and can generate the samples on her own as in \cref{def:informal_vqa}.
Moreover, it is possible to compile existing $\poq$ to satisfy the new definition. For example:
Assuming a random oracle, the interactive protocol of \cite{mahadevPoQ} fits this definition. 
The function keys and trapdoors of their protocol are the public parameters and private verification keys here.
Then, the classical or quantum prover can run the operations of the verifier in the original protocol locally by querying the random oracle for the challenges. 
In the end, the prover sends all the generated transcripts to the distinguisher $\A$, who uses the verification key to distinguish the transcripts.
In the compiled protocol, the verifier is offline as in \cref{def:informal_vqa}, and the verifiability of the original $\poq$ is preserved.

\bigskip
\noindent \textbf{Implications.}
We offer a
few perspectives. 

\begin{itemize}
\item For a quantum advocate (experiment designer): The study of quantum cryptography can provide new insights into designing a verifiable quantum advantage experiment. For example, one possible route indicated by \Cref{thm:informal_efi} is to start with a classically insecure \prim{EFI}, and then apply some amplification technique to dilate the statistical distance to obtain the strong quantum advantage while remaining classically insecure. 
\item For a quantum skeptic: 
A spoofing strategy can be found through the lens of quantum cryptography.
\cref{thm:informal_efi} says that the spoofing distribution can be a distribution that forms an \prim{EFI} pair with most of the output states.
\cref{thm:informal_prs} says that if the output states of an experiment are classically unidentifiable \prim{PRS}, the uniform distribution suffices.
\item For a quantum cryptographer: The quest for verifiability of quantum advantages might lead to quantum cryptographic applications.
\cref{thm:informal_efi} implies that if
an experiment is not verifiable, then it will form a classical-secure
\prim{EFI} with a classical polynomial-time samplable distribution.
Since \cref{thm:informal_efi} can be lifted against quantum adversaries,
it is possible to build
standard \prim{EFI} and primitives based on \prim{EFI} from a quantumly unverifiable experiment.
\end{itemize}

In summary, our results show connections between the verifiability of quantum advantages and the quantum cryptographic primitives. It is worth noting that computational tasks demonstrating quantum advantages on near-term quantum devices might not directly result in useful applications; however, our results show that the quest for quantum advantages and their verifiability can provide new insights and methods to build fundamental quantum cryptographic primitives.  

\subsection{Open problems} 
As this is only an initial attempt at
studying the relationship between the verifiability of quantum
advantage experiments and quantum cryptographic primitives, there are
many open problems.  We list some of them here.
\begin{description}
    \item[Random circuits, \prim{PRS}, and \prim{EFI}.] Are the output states of random circuits \prim{PRS}, or even classically unidentifiable \prim{PRS}? Similarly, can we use random circuits to construct \prim{EFI}? 
    There is evidence that the output states of random circuits are \prim{PRS}.
    For example, it is known that polynomial-sized random circuits are approximate poly-designs \cite{BHH}, which indicates that output states of random circuits are highly indistinguishable from Haar random states. Also, it is possible to build an $\efi$ by pairing random circuits and other sufficiently random samplers while ensuring the two output states are statistically far.   
    \item[Quantum cryptography on NISQ devices.] If the output states of random circuits are not \prim{PRS}, can we still construct \prim{PRS} using less structured NISQ circuits with a fixed architecture? The known constructions of \prim{PRS} \cite{JLS18,BS19,BS20,pseudoentangle} all require structured circuits, although some of them only require shallow circuits. 
    If the NISQ device can construct classically \emph{identifiable} \prim{PRS}, our result cannot rule out the possibility of verifiable quantum advantage on NISQ devices. Similarly, it is interesting to know whether one can use NISQ devices to construct $\efi$. 
    \item[The effect of noise.] It is known that efficient sampling from the output distribution of a noisy random quantum circuit can be done classically \cite{aharonov2023polynomial}. 
    What would be the implication of this result on \prim{PRS}?
    If the output states of \emph{noiseless} random circuits are \prim{PRS}, will noisy circuits still output \prim{PRS}? Intuitively, noise will lead to mixed states, which might affect the security of $\prs$ since a Haar random state is a pure state. Along this line, would it be possible to change the definition of $\prs$ to be indistinguishable from ``noisy Haar random states?'' while keeping all the applications of the original $\prs$? Likewise, how would noise impact the construction of $\efi$? Note that $\efi$ must be two computationally indistinguishable and statistically far mixed states. 
    Thus, noise could even make the states more indistinguishable. On the other hand, if the noise is too large, the two states might be statistically close. 
    \item[$\dvqa$: reduced trusted setup and generic compiler.] Our current transformation of existing $\poq$ protocols to a $\dvqa$ experiment is proven in the random oracle model. Can we replace the random oracle with a suitable family of hash functions such as correlation intractable hash \cite{canetti2016ci}? Ideally, can one design a generic compiler that converts any $\poq$ protocol directly to a $\dvqa$ system?
    \item[Complete characterization of verifiability.] In this work, we identify several basic conditions that give useful characterizations of verifiable quantum advantage. 
   It would be fruitful to find other characterizations of verifiable quantum advantage and investigate their applications in quantum information and cryptography. 
\end{description}

\paragraph{Organization.}
We define the quantum primitives in \cref{sec:prelim}. We formally define verifiable quantum advantage and discuss its connection to a variant of $\mcsp$ in \cref{sec:vqa}. 
In \cref{sec:efi}, we explore the relationship between verifiability and $\efi$, and in \cref{sec:prs} we show the relationship between verifiability and $\prs$.
Then we define and discuss $\dvqa$ in \cref{sec:nizk}.
Finally, in \cref{sec:qvqa}, we lift verifiability to against quantum distinguishers and explore its relation with $\efi$.

\paragraph{Acknowledgement.} 
We thank Yunchao Liu for helpful discussions.
NHC was supported by NSF award FET-2243659,
Google Scholar Award, and DOE award DE-SC0024301.
HF was supported by the US National Science Foundation QLCI program (grant OMA-2016245).
FS was supported in part by the US National Science Foundation grants CCF-2054758 (CAREER) and CCF-2224131.
P.Y. were supported by National Natural Science Foundation of China (Grant No. 62332009, 12347104), Innovation Program for Quantum Science and Technology (Grant No. 2021ZD0302901), NSFC/RGC Joint Research Scheme (Grant no. 12461160276) and Natural Science Foundation of Jiangsu Province (No. BK20243060).

\section{Quantum cryptographic primitives}
\label{sec:prelim}

\begin{definition}[$\efi$ pairs~\cite{BCQ23}]
\label{def:efi}
An $\efi$ pair generator is a quantum algorithm
$G: (b,1^{\secpar}) \mapsto \rho_b$ that on inputs $b\in\bit$ and
security parameter $\secpar$, outputs a quantum state $\rho_b$, such
that the following conditions hold.
\begin{itemize}
\item(Efficient preparation) $\efig$ runs in quantum polynomial time.
\item(Statistically far) $\rho_0$ and $\rho_1$ are statistically distinguishable, i.e.,
  $\frac{1}{2}\norm{\rho_0-\rho_1}_1 \geq 1/\poly(\secpar)$.
\item (Indistinguishability) $\rho_0$ and $\rho_1$ are computationally indistinguishable,
  i.e., for all quantum poly-time algorithm $\A$,
  $|\Pr[\A(\rho_0)=1] - \Pr[\A(\rho_1)=1]|\leq \negl(\secpar)$.
\end{itemize}
\end{definition}

There are a few other special cases that are worth noting.
\begin{itemize}
  
\item (\efid) When all objects are specialized to their classical
  counterparts, we recover the classical primitive of $\efid$
  pairs~\cite{Goldreich90}. Namely, $G$ is a poly-time classical
  algorithm which produces samples from one of two distributions
  $D_b$, and the distinguisher $\A$ is an arbitrary classical
  poly-time algorithm. We can view
  $\rho_b:= \sum_{i} D_b(i) \ket{i}\bra{i}$ as a (mixed) state
  encoding $D_b$.

\item (Quantum-secure \efid) If the indistinguishability of $\efid$
  holds against poly-time \emph{quantum} distinguishers and other
  objects remain classical; we call it a quantum-secure \efid.

\item (Classically secure \qefid) If the distributions of $\efid$
can be quantumly generated but the
  indistinguishability of $\efid$ is only required to hold against
  poly-time \emph{classical} distinguishers, we call it a
  classical secure \qefid \cite{qefid}. 
\item (Quantum-secure \qefid) 
This is $\qefid$ whose indistinguishability holds against poly-time \emph{quantum} distinguishers.
Clearly, any quantum-secure $\qefid$ is immediately an $\efi$ by definition \cite{q-qefid}.
\end{itemize}

We show the following fact about \efi.
\begin{lemma}\label{lem:efihybrid}
    Given security parameter $\lambda$, a pair of quantum states $(\rho_0,\rho_1)$ is an \efi~pair if and only if $(\rho_0^{\otimes\poly(\lambda)},\rho_1^{\otimes\poly(\lambda)})$ is an \efi~ pair.
\end{lemma}
\begin{proof}
It suffices to prove the "only if" direction as the other direction is trvial.  It is proved by a standard hybrid argument.
    Let $(C_0, C_1)$ be a pair of $\efi$ generators, which generate $\rho_0$ and $\rho_1$, respectively. We first show that if there exists an algorithm $\A$ such that $|\Pr[\A(\rho_0^{\otimes t(n)})=1] - \Pr[\A(\rho_1^{\otimes t(n)})=1]|> \negl(n)$ for some polynomial $t(\cdot)$, then $\A$ can break the $\efi$ pair $(C_0, C_1)$. 
    
    We prove this by a hybrid argument. Let $H_i = \rho_0^{\otimes i} \otimes \rho_1^{\otimes (t-i)}$ for which $H_0 = \rho_0^{\otimes t}$ and $H_t = \rho_1^{\otimes t}$. Since $\A$ can distinguish $H_0$ from $H_t$, there must exist an $i^*$ for which $\A$ can distinguish $H_{i*}$ from $H_{i^*+1}$. 
    Then, we can construct a distinguisher $\A'$ to distinguish $\rho_0$ from $\rho_1$ as follows: a) $\A'$ first chooses an $i$ uniformly randomly, b) $\A'$ prepares the state $\rho_0^{\otimes i}\otimes \rho \otimes \rho_1^{\otimes t-i-1}$, where $\rho$ is the input state of the $\efi$ game, c) and then $\A'$ runs $\A$ on $\rho_0^{\otimes i} \otimes \rho \otimes \rho_1^{\otimes t-i-1}$. Note that $\rho_0^{\otimes i}\otimes \rho \otimes \rho_1^{\otimes t-i-1}$ is $H_i$ when $\rho= \rho_0$ and is $H_{i+1}$ otherwise. The probability that $\A'$ succeeds is noticeable since $\A'$ chooses $i=i^*$ with probability $1/t$ in a) and $\A$ distinguishes $H_i$ from $H_{i+1}$ with noticeable probability if $i=i^*$. This completes the proof. 

\end{proof}

\begin{definition}[Pseudorandom states ($\prs$)~\cite{JLS18}]
\label{def:prs}
Let $\secpar$ be the security parameter. Let $\H$ be a Hilbert space and
$\K$ be a key space, both parameterized by $\secpar$. A keyed family
  of quantum states $\{\phi_k \in \S(\H) \}_{k\in \K}$ is
  \emph{pseudorandom} if the following hold:

  \begin{enumerate}
  \item (Efficient generation). There is a polynomial-time quantum
    algorithm $G$ that generates state $\ket{\phi_k}$ on input
    $k$. That is, for all $k \in \K, G(k) = \ket{\phi_k}$.
 
  \item (Pseudorandomness). Any polynomially many copies of
    $\ket{\phi_k}$ with the same random $k \gets \K$ is
    computationally indistinguishable from the same number of copies
    of a Haar random state. More precisely, for any efficient quantum
    algorithm $\A$ and any $m \in \poly(\secpar)$,
    \begin{equation*}
      \left| \Pr_{k\gets \K}[\A(\ket{\phi_k}^{\otimes m}) = 1] -
        \Pr_{\phi \gets \mu}[\A(\ket{\phi}^{\otimes m})  = 1\right|\le
      \negl(\secpar) \, , 
    \end{equation*}
where $\mu$ is the Haar measure on $\S(\H)$.
\end{enumerate}
\end{definition}

\paragraph{Verifiability of $\prs$} Let $\ket{\psi}$ be a state generated from $\mathsf{StateGen}$. Given the corresponding key $k$ and $\ket{\psi}^{\otimes \poly(m)}$, one can verify that $\ket{\psi}$ is generated from $\mathsf{StateGen}(k)$ via swap test. It is worth noting that this verification procedure requires implementing the swap test for quantum states.

\begin{definition}[Classically unidentifiable state family]
    \label{def:unidentifiable}
  Let $\lambda$ be the security parameter. Let $\Phi: = \{\ket{\phi_k}\}_{k\in \K}$ be a family of efficiently
  generatable states, i.e., there exists efficient $C_k$ such that
  $\ket{\phi_k} = C_k\ket{0^n}$. We call $\Phi$ \emph{classically
    unidentifiable} if for any efficient classical algorithm $A$, any
  $ i \ne j$, and any polynomial $m = \poly(\secpar)$
\begin{equation*}
  \left| \Pr[A(C_i, \pmz_i) = 1] - \Pr[A(C_i, \pmz_j)=1] \right| \le \negl(\secpar) \, ,
\end{equation*}
where $\pmz_i: = (z_i^1,\ldots,z_i^m)$ are the outcomes by measuring
$\ket{\phi_i}^{\otimes m}$ in the computational basis.
  \label{def:cunid}
\end{definition}

\begin{remark} 
  The random phase state family $\{ \ket{\phi_k}\}$ proposed
  in~\cite{JLS18} below is an example of classically unidentifiable
  $\prs$,
\begin{align*}
  \ket{\phi_k} = \frac{1}{\sqrt{N}}\sum_{x \in [N]} \omega_N^{f_k(x)}
  \ket{x} \, ,
\end{align*}
where $N=2^n$ and $\{ f_k : [N] \to [N] \mid k \in \K\}$ is a
quantum-secure pseudorandom function. This is because when measured in
the computational basis, $\ket{\phi_k}$ always induce the uniform
distribution. Similarly the special case of binary phases~\cite{BS19},
i.e.,
$\ket{\phi_k} = \frac{1}{\sqrt N}\sum_{x \in \bit^n} (-1)^{f_k(x)}
\ket{x}$ is also a classically unidentifiable \prs.
\end{remark}

\section{Verifiable quantum advantages}
\label{sec:vqa}

\begin{definition}[Verifiable quantum advantages (($\cdistrs,t,\veps$)-\vqa)]
\label{def:weak_vqa}
Let $\secpar$ be the security parameter. 
Let $\circf$ be a family of
polynomial-size quantum circuits on $n$ qubits, where $n=\poly(\secpar)$. 
We call $\circf$ a family of $(\cdistrs, t,\veps)$
\emph{verifiable quantum advantage (\vqa)}, if for all $\distrfam = \{\cdistr_C,\sampler_C\}_{C\in \circf}$ where $\sampler_C$ is a time-$\cdistrs$ classical sampler for $\cdistr_C$, the following holds:
for a uniformly random $C$ drawn from
$\circf$ there exists a time-$t$ \textbf{classical} algorithm $\A$ that
$\veps$-distinguishes $t$ samples $\sample_C$ obtained from measuring $C\ket{0^n}$ in the computational basis from
$t$ samples $\sample_{\cdistr_C}$ of $\cdistr_C$, namely
\begin{align*}
  \E_{C\gets \circf}\left| \Pr[\A(C, \sampler_C, \sample_{C}) = 1] - \Pr[\A(C,
  \sampler_C, \sample_{\cdistr_C}) =
  1]\right|\geq \veps \, . 
\end{align*}
\end{definition}

In this definition, the spoofer is aware of the quantum experiment setup. In particular, they know 
which circuit is performed, so their spoofing distribution can depend on that circuit.
However, we don't allow the spoofing distribution to depend on the distinguisher's algorithm because
the distinguisher can choose to publish his algorithm after receiving all the samples.

This work focuses on the case where $s,t= \poly(\secpar)$ and $\epsilon = 1/\poly(\secpar)$. I.e., we ask whether a classical polynomial-time verifier $\A$ can distinguish the quantum samples from the classical samples with noticeable probability. 
However, one can also consider verifiers with different powers by choosing the proper parameters.

The distinguishers considered in the literature of quantum advantage experiments are weaker than ours because their distinguishers are agnostic of the sampler of the classical distribution. Hence, we give an alternative definition of verifiable quantum advantage below.

\begin{definition}
  [Universally verifiable quantum advantages (($\cdistrs, t,\veps$)-\uvqa)]
\label{def:s_vqa}
Let $\secpar$ be the security parameter. 
Let $\circf$ be a family of
polynomial-size quantum circuits on $n$ qubits, where $n=\poly(\secpar)$. 

We call $\circf$ a family of $(\cdistrs, t,\veps)$
\emph{universally verifiable quantum advantage (\uvqa)}, if for \emph{all}
if for all $\distrfam = \{\cdistr_C, \sampler_C\}_{C\in \circf}$ where $\sampler_C$ is a time-$\cdistrs$ classical sampler for $\cdistr_C$, it holds that for a uniformly random $C$ drawn from
$\circf$, there exists a time-$t$ \textbf{classical} algorithm $\A$ that
$\veps$-distinguishes $t$ samples $\sample_C$ obtained from measuring $C\ket{0^n}$ in the computational basis from
$t$ samples $\sample_{\cdistr_C}$ of $\cdistr_C$, namely
\begin{align*}
  \E_{C\gets \circf}\left| \Pr[\A(C, \sample_{C}) = 1] - \Pr[\A(C, \sample_{\cdistr_C}) =
  1]\right|\geq \veps \, .
\end{align*}
\end{definition}
Comparing the two definitions, it is easy to see that for the same set of parameters, if $\circf$ is \uvqa, it is also \vqa,
and if it is not \vqa, it is not \uvqa~either.

\paragraph{Characterizing $\vqa$.} 
We can give an alternative characterization of $\vqa$. 
For a circuit family $\circf$, we define $\circfgood$ to be the subset of circuits that show statistical quantum advantage and their output distributions are efficiently distinguishable from any efficiently samplable classical distributions. 
\begin{definition}[Classically samplable state $\rho_D$]
Let $\mathcal{D} = \{p_{0^n},\dots,p_{1^n}\}$ be some distributions over $\{0,1\}^n$ for which there exists a PPT algorithm that can efficiently sample from $\mathcal{D}$. We define $\rho_\mathcal{D} = \sum_{x\in \{0,1\}^n} p_x\opro{x}{x}$. 
\end{definition}

\begin{definition} 
\label{def:Cgood_c}
We say $C\in \circf$ is good, if for any distribution $\cdistr_C$ samplable by an efficient classical algorithm $\sampler_C$, the two conditions hold for some $\veps = 1/{\poly(\secpar)}$:
\begin{enumerate}[label=\roman*.]
    \item (Quantum advantage) $\norm{C\opro{0^n}{0^n}C^\dagger - \rho_{\cdistr_C}}_1 \geq \veps $.
    \item (Computationally distinguishability) $\exists$ efficient classical $\A$ s.t. 
    \begin{align*}
        \left| \Pr[\A(C, \sampler_C,
    \sample_C)  = 1] - \Pr[\A(C, \sampler_C, \sample_{\cdistr_C})
    = 1]\right| \ge \veps     
    \end{align*}
    where $\sample_C$ are samples from measuring $C\ket{0}^n$ in the computational basis,
    and $\sample_{\cdistr_C}$ are samples of $\cdistr_C$. 
\end{enumerate}
We denote $\circfgood: = \{C\in \circf: \text{$C$ is good} \}$. 
\end{definition}
Then for $\circf$ to be a $\vqa$, $\circfgood$ has to constitute a noticeable fraction and vice versa. 

\begin{lemma} $\circf$ admits $\vqa$ iff. $\frac{\left|\circfgood\right|}{|\circf|} \ge 1/{\poly(\secpar)}$. 
    \label{lemma:vqa=cgood_c}
\end{lemma}

\begin{proof}

First, the ``if'' direction follows directly from our definition of $\vqa$. 

For the ``only if'' direction, suppose that $\circf$ is
  a $\vqa$ family. Then for any efficient $\distrfam = \{\cdistr_C, \sampler_C\}_{C\in \circf}$, there is a poly-time $\A$, such that
  \begin{align*}
    \E_{C\gets \circf} \left| \Pr[\A(C, \sampler_C,
    \sample_{C}) = 1] - \Pr[\A(C, \sampler_C, \sample_{\cdistr_C})
    = 1]\right| \ge \veps \, ,
  \end{align*}
  for some $\veps \ge \frac{1}{\poly(\secpar)}$ using
  $m=\poly(\secpar)$ samples. 
  This implies
  that for any $\distrfam = \{\cdistr_C, \sampler_C\}_{C\in \circf}$ there must exists a $1/\poly(\secpar)$ fraction of $C\in\circf$
  such that $\A$ successfully tells apart $\sample_C$ and $\sample_{\cdistr_C}$ with inverse-poly
  probability, which 
  further implies that the trace distance between $C\opro{0^n}{0^n}C^\dagger$ and $\rho_{\cdistr_C}$ is at least $1/\poly(\secpar)$. 
    
\end{proof}

\subsection{Example: Verifiable quantum advantage}

Here, we cast some sampling problems as verifiable quantum advantage. 
\paragraph{$\vqa$ from quantum Fourier sampling.} 
\begin{definition}[Simon's problem]
Let $f:\{0,1\}^n\rightarrow \{0,1\}^m$ be a 2-to-1 function with the promise that there exists $s\in \{0,1\}^n$ such that $f(x) = f(x \oplus s)$ for all $x\in \{0,1\}^n$. Given oracle access to $f$, find $s$. 
\end{definition}
It is well-known that there exists a quantum polynomial time algorithm solving Simon's problem, and classical algorithms must use superpolynomially many queries \cite{simon1997power}.   

\begin{theorem}
Let $\mathcal{F}$ be the set of all Simon's functions. Then, relative to $\mathcal{F}$, there exists a quantum circuit family $\circf^{\mathcal{F}}$ that is $\uvqa$.  
\end{theorem}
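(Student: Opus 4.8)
The plan is to exhibit a concrete oracle-relative quantum circuit family based on Simon's algorithm and argue it is $\uvqa$ by showing that no time-$s$ classical sampler (with oracle access, or with only polynomially many queries to $f$) can produce samples that fool every efficient classical distinguisher. The natural candidate is the following. For each Simon's function $f \in \mathcal{F}$ with hidden shift $s_f$, let $C^f$ be the standard Simon circuit: prepare $\frac{1}{\sqrt{2^n}}\sum_x \ket{x}\ket{0^m}$, query $f$ into the second register, then apply $H^{\otimes n}$ to the first register and measure it. The output distribution $D_{C^f}$ is uniform over the subspace $s_f^{\perp} := \{y : y \cdot s_f = 0\}$, i.e. each $y \perp s_f$ has probability $2^{-(n-1)}$. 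So $\circf^{\mathcal{F}} := \{C^f\}_{f \in \mathcal{F}}$, and we draw $f$ (hence $C^f$) uniformly from $\mathcal{F}$ (or from a suitable efficiently samplable sub-ensemble with hidden shifts spread out enough).

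The key steps, in order. First, I would pin down the query model for the classical sampler $\cdistr$: in the relativized world it gets oracle access to $f$, and since it runs in classical time $s = \poly$, it makes only $\poly$ queries. Second, I would invoke (or reprove) the standard lower bound that with $\poly$-many classical queries to a random Simon function one cannot find $s_f$, and more quantitatively cannot even distinguish: conditioned on the $\poly$ query answers seen, the hidden shift $s_f$ remains nearly uniform over an exponentially large set of candidates. Third — the crux — I would design the distinguisher $\A$. Given the circuit $C^f$ (equivalently, oracle access to $f$ and the knowledge that this is a Simon instance), and given a batch of $t = \poly(n)$ claimed samples $\pmz = (z^1,\dots,z^t)$, $\A$ simply checks the linear-algebraic signature of the true distribution: it tests whether all $z^i$ lie in a common codimension-$1$ subspace, i.e. whether the $\mathbb{F}_2$-span of $\{z^i\}$ has dimension $\le n-1$, and whether (using a few extra oracle queries to locate a pair $x, x\oplus s_f$, or just using that $s_f$ is determined by $D_{C^f}$ itself once $t \gtrsim n$) every $z^i$ is orthogonal to $s_f$. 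On Alice's samples this passes with probability $1$ when $t \geq n$ we also get that $\{z^i\}$ spans exactly $s_f^\perp$ with high probability. On Bob's samples: since Bob's sampler, having made only $\poly$ queries, has $s_f$ essentially uniform over a huge set from its view, the probability that all $t$ of Bob's outputs happen to be orthogonal to the \emph{actual} $s_f$ is exponentially small unless Bob's distribution is itself (close to) supported on some subspace $H^\perp$ — but then $\A$, knowing $s_f$, rejects because $H \neq \langle s_f\rangle$ with overwhelming probability. Fourth, I would assemble these into the inequality $\E_{C \gets \circf^{\mathcal{F}}} |\Pr[\A(C,\pmz_C)=1] - \Pr[\A(C,\pmz_\cdistr)=1]| \geq 1 - \negl(n) \geq \veps$, which is exactly \cref{def:s_vqa}; note the $\uvqa$ form is what we want since $\A$ never needs the description $\sampler_\cdistr$, only the circuit and the samples.

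The main obstacle I anticipate is handling Bob's sampler \emph{adaptively and in full generality}: $\cdistr$ may correlate its output distribution with the oracle answers in complicated ways, and may even deliberately output a distribution on a random subspace $H^\perp$ hoping $H = \langle s_f \rangle$. The argument that this fails must combine (i) a query lower bound showing $s_f$ is unpredictable from $\poly$ queries, with (ii) the observation that $\A$ itself can recover $s_f$ from just the structure of $D_{C^f}$ (or a handful of extra queries), so $\A$ "knows more" than Bob about the one bit that matters. Making (i) and (ii) fit together cleanly — in particular ruling out that Bob gains a $1/\poly$ advantage by a lucky partial guess of $s_f$ — is where the care goes; this should follow from a standard compression/adversary-method query bound plus a union bound over the $\poly$-many candidate shifts Bob could plausibly output, but it is the step that genuinely uses the hardness of Simon's problem rather than mere bookkeeping.
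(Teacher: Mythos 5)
Your proposal follows essentially the same route as the paper: build $\circf^{\mathcal{F}}$ from Simon's algorithm, have the distinguisher recover the candidate shift from the samples by linear algebra (the paper calls this Gaussian elimination) and verify it against $f$, and derive soundness from the classical query hardness of Simon's problem. Your write-up is more detailed about the adaptive-sampler soundness argument, but the underlying idea and structure match the paper's proof.
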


\begin{proof}
    First, we use Simon's algorithm to form the quantum circuit family $\circf^\mathcal{F}$ as follows: Given a random Simon's function $f$ with hidden shift $s\in \{0,1\}^n$, the quantum circuit $C^f$ implements the Simon's algorithm to obtain the quantum state $\rho_f$. Note that when measuring $\rho_f$ in the computational basis, one will obtain a random $x$ for which $x\cdot s = 0$. Hence, our circuit family is defined as $\circf^{\mathcal{F}}:=\{C^f: f\in \mathcal{F}\}$. 

    Our distinguisher $\A^f$ is as follows: On inputs $C^{f}$ and sufficiently many samples $x_0,\dots,x_m$, $\A^f$ runs Gaussian elimination (the classical post-processing in Simon's algorithm) to identify $s$ and then check if $s$ is the hidden shift of $f$. 

    Obviously, samples generated from measuring $C^f$ in the computational basis will be accepted by $\A^f$ with high probability. On the other hand, no efficient classical algorithms can generate samples accepted by $\A^f$ with noticeable probability; this follows from the fact that no polynomial-time classical algorithm can solve Simon's problem with noticeable probability even if the classical algorithm depends on $f$. 
\end{proof}

Following a similar idea, we can obtain the following corollary by considering Shor's algorithm for the Factoring problem. 
\begin{corollary}
    Assuming factoring is hard for any classical polynomial-time algorithm. Then, there exists a quantum circuit family that is $\uvqa$. 
\end{corollary}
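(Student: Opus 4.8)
The plan is to mirror the proof for Simon's problem above, with order-finding modulo a composite $N$ replacing Simon's hidden-shift problem, since order-finding is precisely the quantum core of Shor's factoring algorithm. First I would fix the circuit family: index $\circf$ by a hard-to-factor modulus $N$ (for instance $N=pq$ for random primes $p,q$ of bit-length $\secpar/2$, so that ``a uniformly random $C\gets\circf$'' is the standard factoring distribution) together with a uniformly random base $a\in\mathbb{Z}_N^{\ast}$, and let $C_{N,a}$ be the quantum order-finding circuit of Shor's algorithm: prepare a uniform superposition over an $\ell$-qubit register with $\ell=O(\log N)$, compute $x\mapsto a^{x}\bmod N$ into an ancilla register, and apply the quantum Fourier transform to the first register. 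Measuring $C_{N,a}\ket{0^{n}}$ in the computational basis then yields, up to the usual rounding, an outcome $y$ that is a uniformly random multiple of $2^{\ell}/r$ with $r=\mathrm{ord}_N(a)$.

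Next I would define the distinguisher $\A$: on input $\bigl(C_{N,a},(y_1,\dots,y_m)\bigr)$ it reads $N$ and $a$ off the circuit description, runs the classical post-processing of Shor's algorithm on each sample --- continued-fraction expansion of $y_j/2^{\ell}$ to extract a candidate order $r$, followed by $\gcd(a^{r/2}\pm 1,N)$ --- and accepts iff some candidate yields a nontrivial divisor of $N$, which is checkable in classical polynomial time by division. The algorithm $\A$ is classical, runs in $\poly(\secpar)$ time, and uses only $C$ and the samples (not the description of the spoofing sampler), so the bound we prove for this $\A$ will in fact give the stronger $\uvqa$ property.

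For completeness I would invoke the textbook analysis of Shor's algorithm: since $N$ has two prime factors, for at least half of the bases $a$ the order $r$ is even and $a^{r/2}\not\equiv-1\bmod N$; conditioned on such a good pair $(N,a)$, a single measurement outcome lets the continued-fraction step recover $r$ with probability $\Omega(1/\log\log N)$, so $m=\poly(\secpar)$ independent samples recover $r$ --- and hence a nontrivial factor --- with probability $1-\negl(\secpar)$. Thus $\E_{C\gets\circf}\Pr[\A(C,\sample_{C})=1]\ge c-\negl(\secpar)$ for an absolute constant $c>0$. For soundness, suppose toward a contradiction that some time-$\cdistrs$ classically samplable distribution $\cdistr$ satisfied $\E_{C\gets\circf}\Pr[\A(C,\sample_{\cdistr})=1]\ge 1/\poly(\secpar)$. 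Then the classical algorithm that on input $N$ samples $a\gets\mathbb{Z}_N^{\ast}$, draws $\sample_{\cdistr}$ from $\cdistr$, and runs $\A$'s post-processing on $(C_{N,a},\sample_{\cdistr})$ would output a nontrivial factor of a random $N$ with probability $1/\poly(\secpar)$, contradicting the assumed classical hardness of factoring. Hence $\E_{C\gets\circf}\Pr[\A(C,\sample_{\cdistr})=1]\le\negl(\secpar)$, and combining the two estimates,
\[
\E_{C\gets\circf}\bigl|\Pr[\A(C,\sample_{C})=1]-\Pr[\A(C,\sample_{\cdistr})=1]\bigr|\ \ge\ c-\negl(\secpar)\ \ge\ \veps
\]
for every such $\cdistr$, i.e.\ $\circf$ is $\uvqa$.

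The main obstacle here is not conceptual --- Shor's algorithm carries the argument --- but careful bookkeeping: one must choose the modulus distribution and the sample count $m$ so that ``classical hardness of factoring'' matches the soundness reduction exactly, and one must take the completeness probability over the same randomness (the draw of $C$, the measurement outcomes forming $\sample_{C}$, and the internal coins of the continued-fraction and $\gcd$ steps) that appears in the $\uvqa$ definition. One point worth spelling out, in contrast with the relativized Simon's construction, is that soundness now rests on a genuine computational assumption, so the reduction must exhibit the spoofer-plus-distinguisher composition as an honest classical $\poly$-time factoring algorithm; conversely, no \emph{quantum}-hardness assumption is available or needed, since Shor's algorithm makes $D_C$ quantumly easy to sample --- which is exactly why this family witnesses quantum advantage.
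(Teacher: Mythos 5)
Your proposal is correct and matches the paper's intended argument: the paper proves the corollary by remarking that it ``follows a similar idea'' to the Simon's-problem theorem, instantiated with Shor's order-finding circuit as the family, the continued-fraction/gcd post-processing as the classical distinguisher, completeness from the standard analysis of Shor's algorithm, and soundness by observing that a successful classical spoofer composed with the distinguisher would factor. Your write-up simply fills in the details (including the correct observation that the distinguisher ignores the spoofing sampler, yielding the stronger \uvqa{} property) that the paper leaves implicit.
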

Note that neither $\circf^\calF$ nor Shor's algorithm can be implemented on NISQ devices.

\paragraph{Cross-entropy benchmark (XEB).}

In the cross-entropy benchmark (XEB)~\cite{boixo2018characterizing}, given the description of a random quantum circuit $C$, the quantum machine prepares multiple samples $x_1,\dots.x_k\in \{0,1\}^{n}$ accordingly, the verifier tests whether $F_{XEB} = \frac{\sum_{i=1}^k \bra{x_i}C\ket{0^n}}{k}$ is close to $2/2^n$ or close to $1/2^n$. If $F_{XEB}$ is close to $2/2^n$, then the samples are prepared from a quantum machine. If $F_{XEB}$ is close to $1/2^n$, the samples are prepared by some classical machines. 

Suppose that RCS has quantum advantages as described in \cref{def:Cgood_c}, under the linear cross-entropy quantum threshold assumption (XQUATH) \cite{XHOG}, then RCS is a ($s, t,\veps$)-$\uvqa$ with $s = \poly(n)$, $t= \omega(\poly(n))$ and $\veps$ a constant. 
The reason that $t$ is superpolynomial in $n$ is that a classical machine requires time a superpolynomial in $n$ to compute $\abs{\bra{x}C\ket{0^n}}^2$.

Similarly, Boson~\cite{boson} and instantaneous quantum polynomial (IQP)~\cite{iqp} sampling experiments are all \uvqa ~with verification time a superpolynomial in $\secpar$ if the experiments have achieved quantum advantages described in \cref{def:Cgood_c}.
In a recent work \cite{bremner2023iqp}, the authors give evidence that IQP sampling is \uvqa ~under a new conjecture.
   
\subsection{An universal efficient verifier from meta-complexity problems}

Here, we introduce variants of a meta-complexity problem for which the existence of an efficient algorithm would imply a universal polynomial-time verifier for the following class of quantum advantages. 

\begin{definition}[Sample Efficient Verifiable Quantum Advantage ($\sevqa$)]
The definition is the same as ($s,t,\epsilon$)-$\vqa$ except that the number of samples is at most $\poly(\secpar)$. 
\end{definition}

\begin{definition}[Minimum Circuit Size Problems for Samples ($\sampmcsp$)]
Let $\secpar$ be the security parameter.
Let $D$ be a distribution over $n$-bit strings and $t(\cdot)$ be any function where $n=\poly(\secpar)$. Given 
$t(\cdot)$, $s(\cdot)$, and 
polynomially many samples $z_1,\dots.z_\ell$ from $D$, the problem is to decide whether there exists a classical time-$s(n)$ sampler $S_{D'}$ such that $S_{D'}$ can sample from a distribution $D'$ such that $D$ and $D'$ are indistinguishable for any $t(n)$-time classical algorithm $\A$ with polynomially many samples:      
\begin{align*}
    |\Pr_{z_1,\dots,z_\ell\sim D}[\A(z_1,\dots,z_\ell, \sampler_D,\sampler_{D'})=1] - \Pr_{z_1,\dots,z_\ell\sim D'}[\A(z_1,\dots,z_\ell, \sampler_D,\sampler_{D'})=1]| \leq \negl(n), 
\end{align*}
where $\sampler_D$ is a sampler of $D$. 
\end{definition}

\begin{definition}[Oblivious Minimum Circuit Size Problems for Samples ($\obsampmcsp$)]
The definition is the same as above, except that the distinguisher doesn't take the description of $\sampler_{D'}$ as input.
\end{definition}

Both $\obsampmcsp$ and $\sampmcsp$ are computable. A trivial algorithm is as follows: Given samples $z_1,\dots,z_{\poly(n)}$, $s(\cdot)$, and $t(\cdot)$, the algorithm tries all $s(n)$-time samplers and $t(n)$-time distinguisher.

The following theorems show that the existence of efficient classical algorithms for $\obsampmcsp$ and $\sampmcsp$ will imply that all experiments that are $\sevqa$ or $\seuvqa$, i.e., the advantage can be verified using $\poly(n)$ samples, can be verified in classical polynomial time. In other words, algorithms for these two problems provide \emph{universal} procedures to efficiently verify $\sevqa$ or $\seuvqa$.

\begin{theorem}
\label{thm:sevqa-meta}
If $\sampmcsp$ with $(s(\cdot), t(\cdot))$ can be solved in classical polynomial time, then an $(s,t,\epsilon)$-$\sevqa$ experiment is an $(s,\mathrm{poly}(n),\epsilon+\negl(n))$-$\vqa$.     
\end{theorem}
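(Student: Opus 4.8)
The plan is to convert the assumed polynomial-time $\sampmcsp$ solver into a single polynomial-time distinguisher witnessing $(s,\poly(\secpar),\cdot)$-$\vqa$, exploiting the fact that the experiment is already $(s,t,\veps)$-$\sevqa$ and hence verifiable, only (possibly) by a slow time-$t$ verifier. Fix an arbitrary time-$s$ samplable $\cdistr$ with sampler $\sampler_\cdistr$. On input $(C,\sampler_\cdistr,\pmz)$, where $\pmz$ is a set of $\poly(\secpar)$ samples drawn either from $D_C$ or from $\cdistr$, the distinguisher $\A$ runs the $\sampmcsp$ solver on $(\pmz,t(\cdot),s(\cdot))$ and outputs $1$ if the answer is ``NO'' (the distribution underlying $\pmz$ admits no time-$s$ spoofer fooling all time-$t$ tests) and $0$ if the answer is ``YES''. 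Since the solver is polynomial-time, so is $\A$.

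For the analysis, the easy case is $\pmz\gets\cdistr$: here the underlying distribution is literally time-$s$ samplable, so $D'=\cdistr$ (via $\sampler_\cdistr$) certifies a ``YES'' instance and $\A$ outputs $0$ except with negligible probability over the sample set. The substantive case is $\pmz\gets D_C$, where I need that $D_C$ is \emph{not} $(s,t)$-spoofable for a $\ge(\veps-\negl(\secpar))$-fraction of $C\gets\circf$; on such circuits the solver answers ``NO'' and $\A$ outputs $1$ with overwhelming probability. I would establish this by contraposition: if $D_C$ were $(s,t)$-spoofable for more than a $(1-\veps+\negl(\secpar))$-fraction of circuits, the per-circuit spoofers $D'_C$ could be assembled into a single classically time-$s'$ samplable distribution that is time-$t$ indistinguishable from $D_C$ on essentially that entire fraction, so that no time-$t$ algorithm would $\veps$-distinguish it from $\{D_C\}$ on average over $C$, contradicting $(s,t,\veps)$-$\sevqa$. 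Combining the two cases gives $\E_{C\gets\circf}\bigl|\Pr[\A(C,\sampler_\cdistr,\pmz_C)=1]-\Pr[\A(C,\sampler_\cdistr,\pmz_\cdistr)=1]\bigr|\ge\veps-\negl(\secpar)$, which for $\veps=1/\poly(\secpar)$ is exactly the claimed $(s,\poly(\secpar),\veps+\negl(\secpar))$-$\vqa$ up to the (vanishing) sign of the correction term.

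The main obstacle is this assembly step. The naive mixture ``draw $C\gets\circf$, then output a sample of $D'_C$'' need not be a \emph{uniform} time-$s$ algorithm, because $C\mapsto D'_C$ is a priori non-uniform, and one cannot regenerate samples of $D_C$ classically in order to recompute a spoofer on the fly, which would be a classical simulation of the very quantum experiment whose hardness is assumed. This is precisely where the polynomial-time $\sampmcsp$ solver is invoked a second time (e.g.\ to extract a canonical small spoofer uniformly, or through a search-to-decision argument), and it is also where the $\negl(\secpar)$ slack enters: it absorbs both the solver's error probability over random sample sets and the gap between ``$\veps$ advantage on average over $C$'' and ``not-spoofable on a $\veps$-fraction of $C$''. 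A minor additional point is to take $\pmz$ of polynomial but sufficiently large size that the solver is correct on it with overwhelming probability, consistent with the $\sevqa$ promise that $\poly(\secpar)$ samples suffice.
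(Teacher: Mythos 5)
Your distinguisher is exactly the paper's: run the $\sampmcsp$ solver on the polynomially many samples with size bound $s=|\sampler_\cdistr|$, declare ``quantum'' on a NO answer and ``classical'' on a YES answer, and observe that the classical case is trivially a YES instance witnessed by $\sampler_\cdistr$ itself. Where you diverge is in the analysis of the quantum case, and this is worth spelling out. The paper simply asserts that, ``by definition'' of $\sevqa$, samples from $D_C$ for a random $C$ form a NO instance with probability $1-\negl(n)$, and stops there. You correctly observe that this does not follow syntactically: $(s,t,\veps)$-$\sevqa$ quantifies over a \emph{single} time-$s$ distribution $\cdistr$ fixed before $C$ is drawn, whereas a YES answer on samples from $D_C$ only requires \emph{some} time-$s$ spoofer $D'_C$ tailored to that particular $C$, so one must rule out per-circuit spoofers, a quantifier exchange that is not free.

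Your proposed repair --- assemble the per-circuit spoofers into one classical sampler and derive a contradiction with $\sevqa$ --- runs into the non-uniformity problem you name, but also a second one you do not: even granting a uniform, efficient assembly, the resulting single distribution is the mixture $\E_{C'}[D'_{C'}]$, and the $\sevqa$ game compares $D_C$ against samples of that one mixture for a random $C$. Indistinguishability of $D_C$ from $D'_C$ ``on the diagonal'' says nothing about indistinguishability of $D_C$ from the mixture (whose samples mostly come from $D'_{C'}$ with $C'\neq C$), so no contradiction with $\sevqa$ is obtained this way. In short, your construction matches the paper's, and the step you flag as the main obstacle is precisely the step the paper leaves unargued; neither your sketch nor the paper's proof closes it, and doing so seems to require either strengthening the $\sevqa$ hypothesis to hold against circuit-dependent spoofers or a genuinely new argument.
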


\begin{proof}
Suppose that $\A$ is a classical polynomial-time algorithm for $\sampmcsp$. Let $z_1,\dots,z_{\poly(n)}$ be the samples generated from the experiment, $C$ be the description of the quantum circuit, and $\sampler_{D_C}$ be the description of the classical cheating sampler. Then, we can construct a polynomial-time algorithm $\A'$ to identify whether $z_1,\dots,z_{\poly(n)}$ are generated from $C$ as follows: $\A'$ on inputs $(C,\sampler_{D_C},z_1,\dots,z_{\poly(n)})$, applies $\A$ on $(z_1,\dots,z_{\poly(n)})$ and $s=|\sampler_{D_C}|$ where $|\sampler_{D_C}|$ is the circuit size of $\sampler_{D_C}$. If $\A$ outputs $1$ (i.e., there exists a classical circuit with size at most $s$), $\A'$ outputs $0$ (i.e., the samples are not from $C$); otherwise, $\A'$ outputs $1$. 

Obviously, $\A'$ runs in classical polynomial time if $\A$ is a classical polynomial-time algorithm. 

For correctness, since that quantum circuit family $\circf$ is $\sevqa$, no efficient classical sampler can generate polynomially many samples that are $t(n)$-indistinguishable from $C$ chosen randomly from $\circf$ by definition. Therefore, if $(z_1,\dots,z_{\poly(n)})$ are generated from $C$, $\A$ outputs $0$ with a probability that is at least $1-\negl(n)$. On the other hand, if $(z_1,\dots,z_{\poly(n)})$ are generated from $\sampler_D$, there exist classical samplers with size at most $|\sampler_{D_C}|$ generating samples indistinguishable from $D_C$. Therefore, $\A$ outputs $1$, and $\A'$ knows that the samples are not from $C$. This completes the proof. 
\end{proof}

The following corollary follows the same argument.
\begin{corollary}
 If $\obsampmcsp$ can be solved in classical polynomial time, then an $(s,t,\epsilon)$-$\seuvqa$ is also $(s,\mathrm{poly}(n),\epsilon+\negl(n))$-$\uvqa$.     
\end{corollary}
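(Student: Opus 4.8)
The plan is to replay the proof of \cref{thm:sevqa-meta}, adjusted for the oblivious setting in which the distinguisher of the $\uvqa$ game (\cref{def:s_vqa}) receives only the circuit $C$ and the sample set, and correspondingly the $\obsampmcsp$ solver does not receive the cheating sampler's description. Let $\A$ denote the assumed classical polynomial-time algorithm for $\obsampmcsp$. I would build a time-$\poly(n)$ distinguisher $\A'$ for the $\uvqa$ game as follows: on input $(C, z_1,\dots,z_\ell)$ with $\ell = \poly(\secpar)$, run $\A$ on the samples $z_1,\dots,z_\ell$ with the bounds $s(\cdot)$ and $t(\cdot)$ hard-wired in (these are fixed constants of the $(s,t,\epsilon)$-$\seuvqa$ instance, which is precisely why the sampler description is not needed here, in contrast to \cref{thm:sevqa-meta}, where $|\sampler_\cdistr|$ supplied the size threshold). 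If $\A$ answers ``yes'' --- there exists a time-$s$ classical sampler whose output distribution is $t$-indistinguishable from the distribution that produced $z_1,\dots,z_\ell$ --- then $\A'$ outputs $0$, meaning ``not from $C$''; otherwise $\A'$ outputs $1$.

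For correctness I would argue the two cases exactly as in \cref{thm:sevqa-meta}. If $z_1,\dots,z_\ell$ are drawn from a time-$s$ samplable cheating distribution $\cdistr$, then $\sampler_\cdistr$ itself witnesses a time-$s$ sampler whose output is (with advantage $0$, hence) $t$-indistinguishable from $\cdistr$, so $\A$ answers ``yes'' and $\A'$ outputs $0$, up to the negligible error of $\A$. If instead $z_1,\dots,z_\ell$ are drawn from $D_C$ for uniform $C \gets \circf$, then by the $\seuvqa$ hypothesis no time-$s$ classical sampler is $t$-indistinguishable from $D_C$, so $\A$ must answer ``no'' and $\A'$ outputs $1$, again up to negligible error. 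Hence $\Pr[\A'(C,\sample_{\cdistr})=1]$ is negligible while $\Pr[\A'(C,\sample_{C})=1]$ is overwhelming for random $C$, giving $\E_{C\gets\circf}\left|\Pr[\A'(C,\sample_{C})=1]-\Pr[\A'(C,\sample_{\cdistr})=1]\right| \ge \epsilon + \negl(n)$, i.e.\ $\circf$ is $(s,\poly(n),\epsilon+\negl(n))$-$\uvqa$.

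The step I expect to be the main obstacle --- and which is also glossed over in \cref{thm:sevqa-meta} --- is the second case above: converting the \emph{average-over-$C$} guarantee baked into \cref{def:weak_vqa}/\cref{def:s_vqa} into a statement about the answer $\obsampmcsp$ returns on each individual $D_C$. Literally, $\seuvqa$ only rules out a \emph{single, $C$-independent} time-$s$ sampler being $t$-indistinguishable from $D_C$ on average over $C$; it does not, on its face, preclude a different small spoofer of $D_C$ existing for each $C$ separately, in which case $\A$ could correctly answer ``yes'' on the quantum samples and $\A'$ would lose its advantage. I would handle this by reading $\seuvqa$ in its intended worst-case-over-$C$ sense --- with overwhelming probability over $C\gets\circf$, $D_C$ admits no time-$s$ classical sampler that $t$-fools every distinguisher --- under which the two-case analysis closes verbatim; alternatively one could strengthen \cref{def:s_vqa} to demand this. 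The remaining checks --- that $\A'$ runs in classical polynomial time given that $\A$ does, and that the various negligible errors sum to at most $\negl(n)$ --- are routine.
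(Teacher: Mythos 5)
Your proposal is correct and follows essentially the same route as the paper, which simply states that the corollary "follows the same argument" as \cref{thm:sevqa-meta}; your adaptation (hard-wiring the size bound $s(\cdot)$ in place of reading it off the sampler description, and dropping $\sampler_\cdistr$ from the distinguisher's input) is exactly the intended adjustment. Your observation about the average-over-$C$ versus per-$C$ reading of the $\seuvqa$ guarantee is a fair point, but it is an issue the paper's own proof of \cref{thm:sevqa-meta} glosses over in the same way, so it does not distinguish your argument from theirs.
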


\section{Verifiability and $\qefid$}
\label{sec:efi}

\begin{theorem}
 Let $\secpar$ be the security parameter.
 Let $\circf = \{C_k\}$ be a set of $n$-qubit polynomial-size quantum circuits, where $n = \poly(\secpar)$. 
 The following hold:
  \begin{enumerate}[label=(\alph*)]
  \item If for all but a $\negl(\secpar)$ fraction of $C\in \circf$,
  there exists a classically poly-time samplable distribution $D_C$ such
  that $C\opro{0^n}{0^n}C^\dagger$ and $\rho_{\cdistr_C}$ form a classically secure $\qefid$ pair. Then, $\circf$
  is not a $\vqa$ family.
  \item If $\circf$ admits quantum advantage
  (i.e., for any $\distrfam=\{\cdistr_C,\sampler_C\}$,  $\E_{C \gets \circf}\norm{C\opro{0^n}{0^n}C^\dagger - \rho_{\cdistr_C}}_1 \geq 1/{\poly(\secpar)}$),  but $\circf$ is not a $\vqa$ family
  (\cref{def:weak_vqa}), then classically secure $\qefid$ exists.
  \end{enumerate}
  \label{thm:efi-vqa-duality}
\end{theorem}

\begin{proof}
    To show (a), suppose for contradiction that $\circf$ is a $\vqa$ family. Then by the characterization in~\cref{lemma:vqa=cgood_c}, $\frac{\left|\circfgood\right|}{|\circf|} \ge 1/{\poly(\secpar)}$. 
    Notice that the distinguisher used in the definition of $\circfgood$ can be used to distinguish $m$-copies of $C\ket{0^n}$
    from $m$ copies of $\rho_{\cdistr_C}$.
    By \cref{lem:efihybrid}, these $C \in \circfgood$ do not form a classically secure $\qefid$ with any $\cdistr_C$, which shows a contradiction.

    To show (b), since $\circf$ is not a $\vqa$ family, there
must exist a $\distrfam=\{\cdistr_C,\sampler_C\}$ such that for all
quantum poly-time algorithm $\A$,
\begin{align*}
  \E_{C\gets \circf}\left| \Pr[\A(C, \sampler_C, \sample_{C}) = 1] - \Pr[\A(C,
  \sampler_C, \sample_{\cdistr_C}) =
  1]\right|\le \negl(\secpar) \, , 
\end{align*}
where $\sampler_C$ is an efficient sampler for
$\cdistr_C$. 
This means for all but a $\negl(\secpar)$ fraction of $C\in \circf$, $C\ket{0^n}$ and $\rho_{\cdistr_C}$ are 
classically indistinguishable. 
Then the premise that $\circf$ admits quantum advantage implies that there is at least a $1/\poly(\secpar)$ fraction of $C\in\circf$ such that $\norm{C\opro{0^n}{0^n}C^\dagger - \rho_{\cdistr_C}}_1 \ge 1/{\poly(\secpar)}$. Hence the set of $C\in \circf$ satisfying both also constitutes at least a $1/\poly(\secpar)$ fraction, and for these $C$, $C\opro{0^n}{0^n}C^\dagger$ and $\rho_{\cdistr_C}$ form a classically secure $\qefid$. 
\end{proof}

\section{Verifiability and \prim{PRS}}
\label{sec:prs}
\begin{theorem}
    \label{thm:rcs_unverifiable}
   Let $\secpar$ be the security parameter and $n = poly(\secpar)$. Let $\circf = \{C_1,\dots, C_{N}\}$ be a set of $n$-qubit polynomial-sized quantum circuits.  If $\circf$ achieves \uvqa, then the set of states $\{C_1\ket{0^n}, \ldots, C_N\ket{0^n}\}$ is not a classically unidentifiable $\class{PRS}$.
\end{theorem}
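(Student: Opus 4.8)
The plan is to prove the contrapositive: assuming $\Phi:=\{C_k\ket{0^n}\}_{k\in[N]}$ is a classically unidentifiable $\prs$, we show $\circf$ is not $\vqa$. For this it suffices to produce a \emph{single} classically poly-time samplable spoofing distribution that no classical poly-time distinguisher can beat, and the claim is that the uniform distribution $U$ on $\bit^n$ works. So fix an arbitrary classical poly-time distinguisher $\A$ receiving $m=\poly(\secpar)$ samples and set $A(C,\pmz):=\A(C,\sampler_U,\pmz)$ (legitimate, since $\sampler_U$ is a fixed trivial sampler). The goal is to prove
\begin{align*}
\E_{C\gets\circf}\,\left|\,\Pr[A(C,\pmz_C)=1]-\Pr[A(C,\pmz_U)=1]\,\right|=\negl(\secpar)\,,
\end{align*}
where $\pmz_C$ is a set of $m$ computational-basis outcomes of $C\ket{0^n}$ and $\pmz_U$ is a set of $m$ i.i.d.\ uniform strings; this contradicts $\circf$ being $\vqa$ for any noticeable $\veps=1/\poly(\secpar)$.

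I would bound the quantity inside the expectation pointwise, at each $C=C_i$, through four sample distributions: $P_0=D_{C_i}^{\otimes m}$, the outcomes of measuring $(C_i\ket{0^n})^{\otimes m}$; $P_1$, the outcomes of measuring $(C_j\ket{0^n})^{\otimes m}$ for a uniformly random $j\gets[N]$; $P_2$, the outcomes of measuring $\ket\phi^{\otimes m}$ for a Haar-random $\ket\phi$; and $P_3=U^{\otimes m}$. Writing $p_\ell:=\Pr[A(C_i,\pmz)=1]$ for $\pmz$ drawn from $P_\ell$, it suffices to bound $|p_0-p_1|,|p_1-p_2|,|p_2-p_3|$ each by $\negl$. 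For $P_0\to P_1$: in the mixture defining $p_1$ the $j=i$ summand equals $p_0$, while for each $j\ne i$ the quantity $|\Pr[A(C_i,\pmz_i)=1]-\Pr[A(C_i,\pmz_j)=1]|$ is $\negl$ by classical unidentifiability (\cref{def:cunid}, applicable since $m\le t=\poly(\secpar)$); hence $|p_0-p_1|=\negl$ by the triangle inequality. For $P_1\to P_2$: the quantum distinguisher that measures its $m$ input copies in the computational basis and runs $A(C_i,\cdot)$ on the outcomes breaks the $\prs$ property of $\Phi$ with advantage exactly $|p_1-p_2|$, so $|p_1-p_2|=\negl$ by \cref{def:prs} (the $\prs$ guarantee is against quantum distinguishers, and this is a standard reduction hardwiring $C_i$ and one sign bit as non-uniform advice). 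For $P_2\to P_3$: $|p_2-p_3|$ is at most the total variation distance between $P_2$ and $P_3$, shown $\negl$ next. Averaging the bound $|p_0-p_3|\le|p_0-p_1|+|p_1-p_2|+|p_2-p_3|$ over $i$ gives the displayed equation.

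The main obstacle is the last step: showing that the joint distribution of $m=\poly$ computational-basis outcomes of a single Haar-random $n$-qubit state is $\negl(\secpar)$-close in total variation to $m$ i.i.d.\ uniform strings. My plan is to use that, for a complex Haar-random state, the outcome-probability vector $\bigl(\abs{\ipro{x}{\phi}}^2\bigr)_{x\in\bit^n}$ is $\mathrm{Dirichlet}(1,\dots,1)$-distributed — uniform on the $(2^n{-}1)$-simplex — so $P_2$ is exchangeable and a given $m$-tuple's probability depends only on its outcome multiplicities: every all-distinct $m$-tuple has probability $\tfrac{(2^n-1)!}{(2^n+m-1)!}=(1\pm\negl)\,2^{-nm}$, whereas the probability that the $m$ outcomes are not pairwise distinct is at most $\binom{m}{2}\cdot\tfrac{2}{2^n+1}=\negl$ (using $\E_\phi\abs{\ipro{x}{\phi}}^4=\tfrac{2}{2^n(2^n+1)}$). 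Since $U^{\otimes m}$ likewise places $1-\negl$ mass on all-distinct tuples, each of probability exactly $2^{-nm}$, these estimates bound the total variation distance by $\negl(\secpar)$ (here $2^n$ is superpolynomial in $\secpar$, which is automatic for any $\prs$ on $n$ qubits). Alternatively, one can simply invoke the folklore fact, noted in the introduction, that few-copy measurement statistics of Haar-random states are reproducible by an efficient classical sampler. As all three gaps are $\negl(\secpar)$ uniformly over $i$, the average over $C\gets\circf$ is $\negl(\secpar)$, so $\circf$ admits no classical poly-time distinguisher with $1/\poly(\secpar)$ advantage against the uniform spoofer — contradicting that $\circf$ is $\vqa$.
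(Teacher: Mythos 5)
Your proposal is correct and follows the same overall route as the paper's proof: argue the contrapositive, use the uniform distribution as the spoofer, and chain three negligible gaps (fixed circuit's outcomes $\to$ random circuit's outcomes via classical unidentifiability, $\to$ Haar-state outcomes via the $\prs$ guarantee, $\to$ i.i.d.\ uniform strings) by the triangle inequality. The one place you genuinely diverge is the last gap. The paper proves it via its Lemma on collisions: it represents the Haar outcome distribution as normalized sums of squared Gaussians, applies a chi-squared tail bound to show $m=\poly$ samples collide only with negligible probability, and then invokes unitary invariance to argue the conditional distribution given no collision is uniform. You instead use the exact Dirichlet$(1,\dots,1)$ characterization of $\bigl(\abs{\ipro{x}{\phi}}^2\bigr)_x$, compute the exact probability $\tfrac{(2^n-1)!}{(2^n+m-1)!}=(1\pm\negl)2^{-nm}$ of each collision-free tuple, and bound the collision mass by the fourth-moment formula; this yields the total-variation bound directly without a concentration argument and is, if anything, tighter and more self-contained. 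A further small improvement on your side: you define the intermediate hybrid $P_1$ as the honest mixture over a uniformly random index $j$, which makes the reduction to the $\prs$ security game (whose challenge is a uniformly random key) exactly type-correct, whereas the paper's step 2 is phrased for "any $j\ne k$" and is slightly looser in how it matches the $\prs$ definition. Both arguments are sound; yours buys an exact computation and a cleaner reduction, the paper's buys a reusable collision lemma stated for general Haar samples.
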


\begin{proof}
  We prove the theorem by contrapositive. For an arbitrary distribution $D$, we let $\pmz_D: = \{\pmz_D^1,\ldots \pmz_D^m\}$  be $m$ i.i.d. samples from
  $D$, where $m=poly(\secpar)$. Suppose $\{C_1\ket{0^n}, \ldots, C_N\ket{0^n}\}$ is a classically unidentifiable $\class{PRS}$.
  For any $k \in[N]$, let $\pmz_k = \{\pmz_k^1,\ldots,\pmz_k^m\}$ be
  samples generated from measuring $\ket{\psi_k}^{\otimes m}$ with
  $\ket{\psi_k} = C_k\ket{0^n}$ in the computational basis. 
  We will describe a distribution $D$, which is efficiently samplable by a classical algorithm denoted by $\sampler_D$, such that for polynomial-time distinguisher $\A$ and a uniformly random $k\gets [N]$,
  \begin{equation}\label{eqn:zkzd}
    \left| \Pr[\A(C_k, \pmz_k)=1] - \Pr[\A(C_k, \pmz_D)=1] \right| \le
    \negl(\secpar) \, .
  \end{equation}
  This implies that $\circf$ does not achieve $\uvqa$, reaching the contradiction. 
  
    \begin{enumerate}
    \item Since the set of states
      $\{C_1\ket{0^n}, \ldots, C_N\ket{0^n}\}$ is a classically
      unidentifiable $\class{PRS}$, for any $j \neq k$ and any PPT
      algorithm $\A$, we can apply \cref{def:unidentifiable} and get
        \begin{align*}
          \abs{\Pr[\A(C_k,\pmz_k)=1]  - \Pr[\A(C_k,\pmz_j)=1]} < \negl(\secpar)\, .
        \end{align*}
      \item Let $\pmz_\mu$ be samples by measuring
        $\ket{\phi}^{\otimes m}$ in the computational basis where
        $\ket{\phi}\gets \mu$ is a Haar random state. Since
        $\{\ket{\psi_k} = C_k\ket{0} \}_{k \in [N]}$ is a $\prs$ family,
        it holds that for a uniformly random $k\gets [N]$ and any
        $j \ne k$
        \begin{align*}
          \left|\Pr[\A(C_k,\pmz_j)=1]  - \Pr[\A(C_k,\pmz_\mu)=1]
          \right| < \negl(\secpar) \, .
        \end{align*}
        Otherwise, since $C_k$ is independent of the samples, $\A( C_k, \cdot)$ can be used to build a distinguisher between
        $\prs$ and Haar random states.
           \item 
           The distribution $D$ is the uniform distribution on $\{0,1\}^n$, and 
           the sampler $\sampler_D$ simply uniformly samples $m$ distinct samples $\pmz^i_D$'s from $\{0,1\}^n$.
           By \cref{lem:conllision}, $\{\pmz^i_\mu\}_{1\leq i\leq m}$ collide with probability $1-\negl(\secpar)$. By the unitary invariance of the Haar measure, the distribution of $\{\pmz^i_\mu\}_{1\leq i\leq m}$ conditioning on no collision is uniform. Thus, the output of $\sampler_D$ is $\negl(\secpar)$-close to $\pmz_{\mu}$ with probability $1-\negl(\secpar)$.
           Thus, for all $k \in [N]$ and any PPT algorithm $\A$
           \begin{align*}
               \abs{ \Pr[\A(C_k,\pmz_D)=1] - \Pr[\A(C_k,\pmz_\mu)=1]} \leq \negl(\secpar)\, .
           \end{align*}
      \end{enumerate} 
      This theorem follows from chaining all the steps by the triangle inequality. 
\end{proof}

Set $\Delta=\{v\in\mathbb{R}^{2^n}:v_i\geq 0, \sum_i v_i=1\}$ be the set of all probability distributions on $\{0,1\}^n$. Let $(\mathbf{g}_x)_{x\in\{0,1\}^n}$ and $(\mathbf{h}_x)_{x\in\{0,1\}^n}$ be two sequences of i.i.d. random variables drawn from $N(0,1)$.
Set $\mathbf{G}=\sum_x\mathbf{g}_x^2$ and $\mathbf{H}=\sum_x\mathbf{h}_x^2$. 
Notice that $\mathbf{G}$ and $\mathbf{H}$ follow the chi-squared distribution of degree $2^n$, denoted by $\chi_{2^n}$. 
Define the random variable $\mathbf{p}$ such that for all $x \in \{0,1\}^n$
\begin{equation}\label{def:haar}
    \mathbf{p}(x)=\frac{\mathbf{g}_x^2+\mathbf{h}_x^2}{\mathbf{G}+\mathbf{H}},
\end{equation} 
which induces a probability distribution over $\Delta$. 
It is well known that $\mathbf{p}$ is the output distribution when measuring $n$-qubit Haar random states on the computational basis \cite{BS20}. 

\begin{lemma}[{\cite[comment below Lemma 1]{10.1214/aos/1015957395}}]\label{lem:chi2}
  For $n\geq 1$, let   $\mathbf{r}$ be a random variable distributed according to the chi-squared distribution $\chi_n$. Then for every $x>0$, we have
  \[
  \Pr[n-2\sqrt{nx}\leq \mathbf{r}^2\leq n+2\sqrt{nx}+2x] \geq 1-2e^{-x} \; .
  \]
\end{lemma}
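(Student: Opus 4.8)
The stated bound is the standard two-sided concentration inequality for a chi-squared random variable with $n$ degrees of freedom: here $\mathbf{r}^2=\sum_{i=1}^{n}\mathbf{g}_i^2$ for i.i.d.\ $\mathbf{g}_1,\dots,\mathbf{g}_n\sim N(0,1)$, and it is precisely Lemma~1 of~\cite{10.1214/aos/1015957395} specialized to unit weights. The plan is to reprove it by a standard Chernoff-bound argument, handling the two tails separately and combining them by a union bound. Writing $Y:=\mathbf{r}^2$, it suffices to show $\Pr[Y\geq n+2\sqrt{nx}+2x]\leq e^{-x}$ and $\Pr[Y\leq n-2\sqrt{nx}]\leq e^{-x}$, since the complement of the interval $[\,n-2\sqrt{nx},\ n+2\sqrt{nx}+2x\,]$ is contained in the union of these two events.

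First I would record the moment generating function: for $|t|<1/2$ one has $\E[e^{t\mathbf{g}_i^2}]=(1-2t)^{-1/2}$, so by independence $\E[e^{t(Y-n)}]=(1-2t)^{-n/2}e^{-nt}$ for $t<1/2$ and $\E[e^{-t(Y-n)}]=(1+2t)^{-n/2}e^{nt}$ for $t>0$. I would then invoke the elementary estimates $-\log(1-u)-u\leq \tfrac{u^2}{2(1-u)}$ for $0\leq u<1$ (which follows from $\sum_{k\geq 2}u^k/k\leq \tfrac{u^2}{2}\sum_{k\geq 0}u^k$) and $-\log(1+u)+u\leq \tfrac{u^2}{2}$ for $u\geq 0$ (immediate by differentiating); taking $u=2t$, these upgrade the two identities above to the clean bounds $\E[e^{t(Y-n)}]\leq \exp\!\bigl(\tfrac{nt^2}{1-2t}\bigr)$ for $t\in(0,\tfrac12)$ and $\E[e^{-t(Y-n)}]\leq \exp(nt^2)$ for $t>0$.

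For the lower tail, Markov's inequality applied to $e^{-t(Y-n)}$ gives $\Pr[Y\leq n-s]\leq \exp(nt^2-ts)$ for every $t>0$; minimizing the exponent at $t=s/(2n)$ yields $\exp(-s^2/(4n))$, and choosing $s=2\sqrt{nx}$ makes this exactly $e^{-x}$. For the upper tail, Markov applied to $e^{t(Y-n)}$ gives $\Pr[Y\geq n+s]\leq \exp\!\bigl(\tfrac{nt^2}{1-2t}-ts\bigr)$ for $t\in(0,\tfrac12)$; I would then optimize over $t$ — equivalently, minimize the exact bound $e^{-(n+s)t}(1-2t)^{-n/2}$ at $t=\tfrac{s}{2(n+s)}$, which gives $\Pr[Y\geq n+s]\leq \exp\!\bigl(-\tfrac s2+\tfrac n2\log(1+\tfrac sn)\bigr)$, and then use $e^{2a}\geq 1+2a+2a^2$ with $a=\sqrt{x/n}$ — to conclude that $s=2\sqrt{nx}+2x$ forces the exponent to be at most $-x$. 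The main obstacle is this last step: carrying out the optimization and verifying the precise constants of the asymmetric bound $2\sqrt{nx}+2x$, where the extra additive $2x$ (reflecting the heavier right tail of the chi-squared distribution) is exactly what drops out; it is more bookkeeping than insight, and if one prefers to skip it the inequality may simply be cited as Lemma~1 of~\cite{10.1214/aos/1015957395} with all weights equal to $1$.
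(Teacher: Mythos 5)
The paper does not actually prove this lemma --- it is imported verbatim as the unit-weight case of Laurent--Massart's Lemma~1 (the ``comment below Lemma~1'' in the cited reference), so there is no in-paper argument to compare against. Your reconstruction is correct and is essentially the original derivation: the MGF identities and the two logarithmic estimates give exactly the sub-gamma bounds $\E[e^{t(Y-n)}]\leq\exp\bigl(\tfrac{nt^2}{1-2t}\bigr)$ and $\E[e^{-t(Y-n)}]\leq\exp(nt^2)$ that Laurent--Massart use, and the lower-tail optimization at $t=s/(2n)$ with $s=2\sqrt{nx}$ gives $e^{-x}$ on the nose. For the upper tail, your choice $t=\tfrac{s}{2(n+s)}$ is indeed the exact minimizer of $e^{-(n+s)t}(1-2t)^{-n/2}$, yielding the exponent $-\tfrac{s}{2}+\tfrac{n}{2}\log(1+\tfrac{s}{n})$; with $s=2\sqrt{nx}+2x$ and $a=\sqrt{x/n}$ the required inequality reduces to $\log(1+2a+2a^2)\leq 2a$, which is immediate from the nonnegativity of the higher-order terms in $e^{2a}=1+2a+2a^2+\tfrac{4a^3}{3}+\cdots$, so the ``bookkeeping'' step you flag does close cleanly. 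One cosmetic remark: the lemma's notation $\mathbf{r}^2$ for a $\chi_n$-distributed $\mathbf{r}$ is internally inconsistent in the paper; your reading of $\mathbf{r}^2=\sum_{i=1}^n\mathbf{g}_i^2$ is the intended one and matches how the lemma is invoked in the collision bound.
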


\begin{lemma}\label{lem:conllision}
    Given the security parameter $\secpar, n=poly(\secpar), m=poly(n)$, let $\nu$ be a distribution drawn from $\mathbf{p}$. With probability $1-\negl(\secpar)$, the following holds.

    Let $\pmz=(\pmz_1,\ldots,\pmz_m)$ be $m$ i.i.d. samples drawn from $\nu$. Then 
    \[\Pr[\exists~i\neq j:\pmz_i=\pmz_j]\leq 50m^22^{-n}\]
\end{lemma}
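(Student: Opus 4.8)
The plan is to bound the collision probability for a "typical" draw of the Haar-induced distribution $\nu \sim \mathbf{p}$. The key quantity is $\sum_x \nu(x)^2$ (the collision probability of a single pair), since by a union bound over the $\binom{m}{2}\le m^2/2$ pairs we get $\Pr[\exists i\neq j: \pmz_i=\pmz_j]\le \binom{m}{2}\sum_x\nu(x)^2 \le \frac{m^2}{2}\sum_x\nu(x)^2$. So it suffices to show that, with probability $1-\negl(\secpar)$ over $\nu\sim\mathbf{p}$, we have $\sum_x \nu(x)^2 \le 100\cdot 2^{-n}$, which then gives the stated bound $50m^2 2^{-n}$.

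First I would write, using the representation in~\eqref{def:haar}, $\sum_x \nu(x)^2 = \frac{\sum_x (\mathbf{g}_x^2+\mathbf{h}_x^2)^2}{(\mathbf{G}+\mathbf{H})^2}$. For the denominator: $\mathbf{G}+\mathbf{H}$ is a sum of $2^{n+1}$ i.i.d. squared standard Gaussians, i.e.\ $\chi^2$ with $N:=2^{n+1}$ degrees of freedom; applying \cref{lem:chi2} with, say, $x=\secpar$ (or $x = n^2$) shows $\mathbf{G}+\mathbf{H} \ge N - 2\sqrt{Nx} \ge N/2$ with probability $1-2e^{-x} = 1-\negl(\secpar)$, and similarly an upper bound $\mathbf{G}+\mathbf{H}\le 2N$. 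For the numerator $\sum_x(\mathbf{g}_x^2+\mathbf{h}_x^2)^2$: each term has mean $\E[(\mathbf{g}_x^2+\mathbf{h}_x^2)^2]$, which is a fixed constant (it equals $\E[\mathbf{g}^4] + 2\E[\mathbf{g}^2]\E[\mathbf{h}^2] + \E[\mathbf{h}^4] = 3+2+3 = 8$), so the numerator has mean $8\cdot 2^n$. A Markov or Chebyshev/concentration bound then shows the numerator is at most, say, $C\cdot 2^n$ for a suitable constant $C$ with probability $1-\negl(\secpar)$ — here I would lean on a simple moment bound or a Bernstein-type tail inequality for the sum of i.i.d.\ subexponential random variables $(\mathbf{g}_x^2+\mathbf{h}_x^2)^2$.

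Combining: on the intersection of these high-probability events, $\sum_x\nu(x)^2 \le \frac{C\cdot 2^n}{(N/2)^2} = \frac{C\cdot 2^n}{2^{2n}} = C\cdot 2^{-n}$, and choosing constants so that $C \le 100$ yields $\sum_x\nu(x)^2\le 100\cdot 2^{-n}$, hence the union bound gives $\Pr[\exists i\neq j:\pmz_i=\pmz_j]\le \frac{m^2}{2}\cdot 100\cdot 2^{-n} = 50m^2 2^{-n}$, as claimed. All the failure events have probability $\negl(\secpar)$, and since there are only constantly many of them the union is still $\negl(\secpar)$.

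\textbf{Main obstacle.} The routine part is the denominator, which \cref{lem:chi2} handles directly. The slightly more delicate step is the upper tail of the numerator $\sum_x (\mathbf{g}_x^2 + \mathbf{h}_x^2)^2$: these summands are heavier-tailed than $\chi^2$ variables (fourth powers of Gaussians), so \cref{lem:chi2} does not apply verbatim and one needs either a direct moment computation plus Markov (which suffices, since we only need a constant-factor upper bound with $\negl$ failure probability — and Markov on the second moment already gives inverse-polynomial, so one would instead use a higher moment or a subexponential Bernstein bound to get negligible failure), or a concentration inequality for sums of independent subexponential random variables. I expect this is where most of the technical care goes, but it is standard; the rest is bookkeeping with the triangle/union bounds.
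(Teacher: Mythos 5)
Your proposal is correct, and it takes a mildly but genuinely different route from the paper's. Both proofs start the same way: union bound over pairs to reduce to $\sum_x\nu(x)^2=\frac{\sum_x(\mathbf{g}_x^2+\mathbf{h}_x^2)^2}{(\mathbf{G}+\mathbf{H})^2}$, and \cref{lem:chi2} to control the denominator. The divergence is in handling the numerator and converting to a statement that holds with high probability over $\nu$. The paper never proves concentration for the numerator: it bounds $\E_\nu$ of the \emph{entire} collision probability by conditioning on the good event $\mathcal{E}$ for the denominator, using only the first moment $\E[\sum_x(\mathbf{g}_x^2+\mathbf{h}_x^2)^2]=8\cdot 2^n$ on $\mathcal{E}$ and the trivial pointwise bound $\sum_x(\mathbf{g}_x^2+\mathbf{h}_x^2)^2\le(\mathbf{G}+\mathbf{H})^2$ on $\neg\mathcal{E}$, and then applies Markov once at the very end --- which costs a factor of $2^{n/2}$, so the proof as written only yields $50m^22^{-n/2}$ with probability $1-2^{-n/2}$ (slightly weaker than the lemma's stated $2^{-n}$). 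You instead prove a high-probability \emph{pointwise} bound on the numerator and combine it with the denominator bound, which, once filled in, actually delivers the stated $O(m^2 2^{-n})$ bound with no final Markov loss. The step you flag as the main obstacle is indeed the only thing the paper's route avoids, but it is easier than you suggest: Chebyshev alone gives $\Pr\bigl[\sum_x(\mathbf{g}_x^2+\mathbf{h}_x^2)^2 > C\cdot 2^n\bigr]\le O(2^{-n})$ since the variance of the sum is $O(2^n)$ and the deviation threshold is $\Theta(2^n)$, and $2^{-n}$ is already $\negl(\secpar)$ because $n=\poly(\secpar)$ --- so no sub-Weibull Bernstein machinery is needed, contrary to your worry that the second moment "only gives inverse-polynomial" failure. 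In short: your approach trades the paper's economy (first moment plus one Markov) for a cleaner and quantitatively stronger conclusion.
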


\begin{proof}
Let $\nu$ be a distribution drawn from $\mathbf{p}$ defined in $\eqref{def:haar}$. Notice that $\nu_x\sim(\mathbf{g}_x^2+\mathbf{h}_x^2)/(\mathbf{G}+\mathbf{H})$.
Then the probability that samples drawn according to $v$ have a collision is
\[\Pr_{\nu}[\exists i\neq j \in [m] ~\text{s.t.}~\pmz_i=\pmz_j]\leq m^2\sum_x\nu(x)^2.\]

   Let $\mathcal{E}$ be the event that $\mathbf{G}\geq 2^n-4\sqrt{2^n n}$ and $\mathbf{H}\geq 2^n-4\sqrt{2^n n}$.
    By \cref{lem:chi2}, 
   \begin{equation}\label{eqn:Gtailbound}
       \Pr[\mathcal{E}]\geq 1-4e^{-n}.
   \end{equation}
   Then we have
    \begin{eqnarray*}
        &&\mathbb{E}_\nu[\Pr[\exists i\neq j~\text{s.t.}~\pmz_i=\pmz_j]]\\
        &\leq& m^2\mathbb{E}_\nu\left[\sum_x\frac{(\mathbf{g}_x^2+\mathbf{h}_x^2)^2}{(\mathbf{G}+\mathbf{H})^2}\right]\\
        &=& m^2\mathbb{E}_\nu\left[\sum_x\frac{(\mathbf{g}_x^2+\mathbf{h}_x^2)^2}{(\mathbf{G}+\mathbf{H})^2}\mid \mathcal{E}\right]\cdot\Pr[\mathcal{E}]+m^2\mathbb{E}_\nu\left[\sum_x\frac{(\mathbf{g}_x^2+\mathbf{h}_x^2)^2}{(\mathbf{G}+\mathbf{H})^2}\mid \neg\mathcal{E}\right]\cdot\Pr[\neg\mathcal{E}]\\
        &\leq&m^2\mathbb{E}_\nu\left[\frac{\sum_x(\mathbf{g}_x^2+\mathbf{h}_x^2)^2}{(2^n-4\sqrt{2^n n})^2}\mid\mathcal{E}\right]\cdot\Pr[\mathcal{E}]+m^2\Pr[\neg\mathcal{E}]\\
        &\leq& \frac{m^2}{2^{2n-2}}\mathbb{E}_\nu\left[\sum_x(\mathbf{g}_x^2+\mathbf{h}_x^2)^2\right]+4m^22^{-n}\\
        &=& \frac{8m^22^n}{2^{2n-2}}+4m^22^{-n}\leq 50m^22^{-n},
    \end{eqnarray*}
    where in the second inequality we use 
     \[\sum_x(\mathbf{g}_x^2+\mathbf{h}_x^2)^2\leq(\mathbf{G}+\mathbf{H})^2\]
     to bound $\mathbb{E}_\nu\left[\sum_x\frac{(\mathbf{g}_x^2+\mathbf{h}_x^2)^2}{(\mathbf{G}+\mathbf{H})^2}\mid \neg\mathcal{E}\right] \leq 1$.
    By the Markov inequality,
      
    \[\Pr_{\nu}\left[\Pr[\exists~i\neq j~\text{s.t.}~\pmz_i=\pmz_j]\leq 50m^2 2^{-n/2}\right]\geq 1-2^{-n/2}.\]
    Then the lemma follows from that $n=poly(\lambda), m=poly(n)$.     
\end{proof}
\section{Designated verifiability}
\label{sec:nizk}
Another type of verifiable quantum advantage experiments involve interactions between a trusted verifier and a computationally bounded quantum or classical prover, which is not covered by \cref{def:weak_vqa}.
Such experiments are called \emph{proof of quantumness} ($\poq$).
\begin{definition}[Proof of quantumness (($s,t,\veps$)-$\poq$)]
\label{def:ivqa}
Let $\secpar$ be the security parameter. 
We say a protocol between a classical verifier $V$ and a prover $P$ is an $(s, t,\veps)$-$\poq$, 
if there exists a quantum time-$s$ prover $P_Q$ such that
for \emph{all}
 time-$t$ classical
prover $P_C$, it holds that 
\begin{align*}
   \Pr[\langle V, P_Q\rangle = 1] - \Pr[\langle V, P_C\rangle =
  1]\geq \veps \, ,
\end{align*}
where $\langle V, P_Q\rangle$ and $\langle V, P_C\rangle$ denote the decision of $V$ after interacting with $P_Q$ and $P_C$ respectively.
\end{definition} 

Some $\poq$ protocols have been proposed in
\cite{mahadevPoQ,computationalCHSH,simplePoQ}, but only the verifier of the protocol proposed in \cite{simplePoQ} is offline.
It would be an
intriguing feature if the trusted party in $\poq$ could be offline just
as in \cref{def:weak_vqa}.  
This motivates our definition of $\vqa$
with a setup stage, where a trusted party initializes a $\vqa$
experiment with some public parameters as well as a verification key
that is issued to a designated verifier. Quantum provers can then work
offline.

\begin{definition}[Designated verifiable quantum advantages (($s,t,\veps$)-\dvqa)]
\label{def:designated_vqa}
Let $\secpar$ be the security parameter. 
Consider an experiment $E$ specified by $(\setup, P)$
where
\begin{itemize}
        \item $(\pp, \vk) \gets \setup(1^\secpar)$: a classical time-$\poly(\secpar)$ algorithm that outputs a public parameter $\pp$ and a verification key $\vk$,
        \item $z \gets P(\pp)$: a quantum time-$\poly(\secpar)$ algorithm that outputs a transcript $z$ on input $\pp$.
    \end{itemize}
We denote a classical simulation algorithm of $E$ by $\simulate$.

We say $E$ is $(s, t,\veps)$-\emph{designated verifiable quantum advantage (\dvqa)}, 
if there exists some polynomial $q$ of $\secpar$ such that for \emph{all}
 time-$t$ classical
simulator $\simulate$, there exists a classical time-$s$ algorithms  $\A(\pp, \vk, P, \simulate, \sample) \in \{0,1\}$ 
that on input $\pp$, $\vk$, the description of $P$, the description of $\simulate$, and $q(\lambda)$ transcripts generated by either $P$ or $\simulate$, outputs a bit,
such that
\begin{align*}
   \E_{(\pp,\vk) \gets \setup(1^\secpar)}\left| \Pr[\A(\pp,\vk, P, \simulate, \sample_{P}) = 1] - \Pr[\A(\pp, \vk, P, \simulate, \sample_{\simulate}) =
  1]\right| \geq \veps \, ,
\end{align*}
where $\sample_{P}$ is generated by running $P(\pp)$ $q(\secpar)$ times independently,
and $\sample_\simulate$ is generated by $\simulate$. 
\end{definition} 

It is called designated $\vqa$ because only the designated distinguisher $\A$ can get the verification key $\vk$.
When not explicitly mentioned, $s = t =\poly(\secpar)$ and $\veps = 1/\poly(\secpar)$.

Assuming a random oracle, $\poq$ of \cite{mahadevPoQ} can be made non-interactive and satisfies \cref{def:designated_vqa} \cite{alagic2020non}.
More specifically, the trusted party first generates multiple function keys along with their trapdoors. 
The function keys are published as $\pp$ and the trapdoors are kept as $\vk$.
When the prover gets $\pp$, the prover follows the steps of the original protocol of \cite{mahadevPoQ} on each function key, except that the challenge is generated by querying the random oracle.
In the end, the trusted party collects all the transcripts, runs 
the verifier's check of \cite{mahadevPoQ} on each transcript, and accepts if all of them are correct.
It is easy to see that in the new protocol, the trusted party doesn't need to stay online when the prover is generating the transcripts.
The authors of \cite{simplePoQ} cleverly design their protocol to achieve $\dvqa$ without following the Fiat-Shamir paradigm, but they still require a random oracle.

\begin{theorem}
    \label{thm:poq2dvqa} Assuming (classical) RO and LWE, there exists a $\dvqa$ experiment. 
\end{theorem}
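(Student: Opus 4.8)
The plan is to instantiate the construction sketched in the paragraph preceding the theorem: start from the LWE-based proof of quantumness of~\cite{mahadevPoQ}, apply the Fiat–Shamir-style transformation of~\cite{alagic2020non} in the random oracle model to make it non-interactive, and then repackage the resulting non-interactive protocol as a $\dvqa$ experiment in the sense of~\cref{def:designated_vqa}. First I would recall that~\cite{mahadevPoQ}, assuming LWE, gives a $(s,t,\veps)$-$\poq$ protocol (\cref{def:ivqa}) with $s,t = \poly(\secpar)$ and $\veps$ a constant: the verifier samples a noisy trapdoor claw-free function key $f$ together with its trapdoor, the quantum prover commits to a superposition over a claw, and a constant-round challenge–response interaction lets the verifier, using the trapdoor, accept honest quantum provers while rejecting all $\poly$-time classical provers. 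By standard parallel repetition this soundness gap is amplified to $1-\negl(\secpar)$, which will be convenient when collecting $q(\secpar)$ transcripts.

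Next I would perform the compilation. Define $\setup(1^\secpar)$ to run the key-generation of~\cite{mahadevPoQ} independently $q(\secpar)$ times, output the tuple of function keys as $\pp$ and the tuple of trapdoors as $\vk$. Define the honest quantum prover $P(\pp)$ to execute, for each function key in $\pp$, the honest quantum strategy of the Fiat–Shamir-compiled protocol: it generates the first message, obtains the challenge by querying the random oracle on (the protocol's partial transcript), and produces the response; the full per-key transcript is the output $z$. The designated distinguisher $\A(\pp,\vk,P,\simulate,\sample)$ simply runs the verifier's trapdoor check of~\cite{mahadevPoQ} on each of the $q(\secpar)$ transcripts in $\sample$ (recomputing the Fiat–Shamir challenges by querying the RO itself) and outputs $1$ iff all checks pass. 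Completeness of the honest quantum prover gives $\Pr[\A(\pp,\vk,P,\simulate,\sample_P)=1] \geq 1-\negl(\secpar)$. For soundness, I would invoke the main result of~\cite{alagic2020non}: any classical $\simulate$ producing transcripts that pass the trapdoor check with non-negligible probability would, by the Fiat–Shamir/RO extraction argument there, yield a classical prover breaking the interactive soundness of~\cite{mahadevPoQ}, hence breaking LWE; so $\Pr[\A(\pp,\vk,P,\simulate,\sample_\simulate)=1] \leq \negl(\secpar)$. Combining, the advantage in~\cref{def:designated_vqa} is $\geq 1-\negl(\secpar) \geq \veps$ for any $\veps = 1/\poly(\secpar)$, and all of $\setup$, $P$, $\A$ run in $\poly(\secpar)$ time, so $s=t=\poly(\secpar)$.

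The main obstacle is making the Fiat–Shamir soundness argument go through in exactly the $\dvqa$ syntax, where the classical adversary $\simulate$ is a \emph{transcript generator} rather than an interactive prover: I need that a classical algorithm which, given $\pp$ and oracle access to the RO, outputs a transcript passing the trapdoor check, can be rewound/extracted into a classical cheater for the underlying interactive $\poq$. This is precisely what the analysis of~\cite{alagic2020non} provides (their compiler is designed for the~\cite{mahadevPoQ} protocol and is sound against classical RO-querying provers), so the step is to cite it with the right parameters rather than reprove it; one should double-check that the measure-and-reprogram / rewinding loss is only polynomial, which it is since the compiled protocol has a constant number of rounds. A secondary, purely bookkeeping point is that $\A$ also receives the (irrelevant) descriptions of $P$ and $\simulate$ as input, which it can safely ignore, and that the per-key transcripts are independent so the "accept iff all pass" rule boosts the honest acceptance probability and crushes the cheating one without any subtlety. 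I would close by noting that the random oracle can plausibly be replaced by a correlation-intractable hash family, but that is left as one of the open problems listed above.
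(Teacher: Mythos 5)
Your proposal is correct and follows essentially the same route as the paper: the paper's justification is precisely the sketch in the paragraph preceding the theorem statement, namely publishing the \cite{mahadevPoQ} function keys as $\pp$, keeping the trapdoors as $\vk$, having the prover derive challenges from the random oracle, and citing \cite{alagic2020non} for soundness of the resulting non-interactive protocol. Your write-up simply fills in more of the bookkeeping (parallel repetition over $q(\secpar)$ keys and the classical-prover extraction argument) than the paper does.
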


Moreover, \cref{def:weak_vqa} can be viewed as a special case of \cref{def:designated_vqa}:
The public parameter is the circuit family $\circf$.
There is no $\vk$.
The prover $P$ runs a random $C \in \circf$ on $\ket{0^n}$ and measures the qubits in the computational basis.
The simulator runs $\sampler_{\cdistr_C}$ to generate samples.
\section{Verifying quantum advantage by a quantum verifier}
\label{sec:qvqa}

\subsection{Defining quantum verifiable quantum advantages}
\label{sec:defqvqa}

One lesson in recent developments of quantum advantage experiments is
that classically verifying the results can be challenging. Can we
employ quantum computers to help with the verification? This might
sound circular, but we think that it is a viable route worth
exploring. When we advance beyond the NISQ era, quantum advantage
experiments may be repurposed as benchmarking for quantum computers,
and checking the benchmarking metrics will be done by other quantum
computers.

In fact, we argue that it is already relevant in the NISQ era. A
classical verifier could already benefit dramatically when equipped
with limited quantum computing capacity, especially if we mindfully
tailor our experiment design to this setting. For example, interactive
protocols for proving quantumness were known relatively early as long
as a verifier can prepare some simple single-qubit
states~\cite{BFK09,ABE10}; whereas constructing a protocol with a
purely classical verifier had been notoriously challenging and was
only resolved recently in Mahadev's breakthrough
result~\cite{Mahadev22}. To put it in a real-world context, 
people are investigating whether RCS results can be verified quantumly \cite[Section V.C]{sampling_review}.
Hence, two non-colluding parties (e.g., Google vs. IBM) could verify the other
party's results, with the help of their respective NISQ device.

Hence, we extend our definitions and formalize verifiable quantum
advantage in the presence of quantum verifiers (\qvqa). On the
technical side, quantum verification enables a smoother duality
between $\efi$ and $\qvqa$, as shown in~\cref{sec:efi=qvqa}.

\begin{definition}[Quantum-verifiable quantum advantages
  ($(\cdistrs,t,\veps)$-\qvqa)]
\label{def:qvqa}
Let $\secpar$ be the security parameter. Let $\circf$ be a family of
polynomial-size quantum circuits on $n$ qubits. 

We call $\circf$ a family of $(\cdistrs, t,\veps)$
\emph{quantum-verifiable quantum advantage (\qvqa)}, if for all $\distrfam = \{\cdistr_C, \sampler_C\}_{C\in \circf}$ where $\sampler_C$ is a time-$\cdistrs$ classical sampler for $\cdistr_C$, it holds that for a uniformly random $C$ drawn from
$\circf$, there exists a time-$t$ \textbf{quantum} algorithm $\A$ that
$\veps$-distinguishes $t$ samples $\sample_{C}$ obtained from measuring $C\ket{0^n}$ from $t$ samples $\sample_{\cdistr_C}$ drawn from $\cdistr_C$, namely
\begin{align*}
  \E_{C\gets \circf}\left| \Pr[\A(C, \sampler_C, \sample_{C}) = 1] - \Pr[\A(C,
  \sampler_C, \sample_{\cdistr_C}) =
  1]\right|\geq \veps \, . 
\end{align*}
\end{definition} 

\begin{definition}[Universally quantum-verifiable quantum advantages
  ($(\cdistrs,t,\veps)$-\uqvqa)]
  \label{def:uqvqa}

  Let $\secpar$ be the security parameter. Let $\circf$ be a family of
  polynomial-size quantum circuits on $n$ qubits. 

  We call $\circf$ a family of $(\cdistrs,t,\veps)$
  \textbf{universally} quantum-verifiable quantum advantage (\uqvqa),
  if for all $\distrfam = \{\cdistr_C, \sampler_C\}_{C\in \circf}$ where $\sampler_C$ is a time-$\cdistrs$ classical sampler for $\cdistr_C$, there exists a time-$t$ quantum
  algorithm $\A$, such that for a uniformly random $C$ drawn from
  $\circf$, $\A$ $\veps$-distinguishes $t$ samples $\sample_{C}$ obtained from measuring $C\ket{0^n}$ from $t$ samples $\sample_{\cdistr_C}$ drawn from $\cdistr_C$, namely
  \begin{align*}
    \E_{C \gets \circf}\left| \Pr[\A(C, \sample_{C}) = 1] - \Pr[\A(C, \sample_{\cdistr_C}) =
    1]\right|\geq \veps \,.
  \end{align*}
\end{definition}

We are typically concerned with the efficient regime where we consider
all poly-time samplable classical distributions, poly-time
distinguisher $\A$ and inverse-poly noticeable advantage, i.e.,
$s = \poly(\secpar)$, $t = \poly(\secpar)$, and
$\veps = \frac{1}{\poly(\secpar)}$. We will simply call $\circf$ a
$\qvqa$ (resp. \uqvqa) if the conditions in~\cref{def:qvqa}
(resp.~\cref{def:uqvqa}) are satisfied in this setting.

\paragraph{Characterizing \qvqa.} Similar to what we did for \vqa, we can give an alternative characterization of \qvqa. For a circuit family $\circf$, we define $\circfgood$ to be the subset of circuits that show statistical quantum advantage and their output distributions are efficiently distinguishable from any efficiently samplable classical distributions. 

\begin{definition} 
For any $n$-qubit unitary circuit $C$, define another $2n$-qubit
unitary circuit $\hat{C}: = \cnot (C \otimes \id)$, where
$\cnot: \ket{x}\ket{y} \mapsto \ket{x}\ket{x\oplus y}$ is the
generalized CNOT gate on $n$-qubit. We define
\begin{align*}
  \rho_C : = \Tr_B(\hat C \opro{0^{2n}}{0^{2n}}_{AB}\hat C^{\dag})
  \, ,
\end{align*}
which is equivalent to a quantum state encoding the distribution induced by measuring $C\ket{0^n}$ under
the computational basis.

We say $C\in \circf$ is good, if for any distribution $\cdistr_C$ samplable by an efficient classical algorithm $\sampler_C$, the two conditions hold for some $\veps = 1/{\poly(\secpar)}$:
\begin{enumerate}[label=\roman*.]
    \item (Quantum advantage) $\norm{\rho_C - \rho_{\cdistr_C}}_1 \geq \veps $
    \item (Computationally distinguishability) $\exists$ efficient classical $\A$ s.t. 
    \begin{align*}
        \left| \Pr[\A(C, \sampler_C,
    \rho_C)  = 1] - \Pr[\A(C, \sampler_C, \rho_{\cdistr_C})
    = 1]\right| \ge \veps     
    \end{align*}
\end{enumerate}
We denote $\circfgood: = \{C\in \circf: \text{$C$ is good} \}$. 
    \label{def:cgood}
\end{definition}
Then for $\circf$ to be a $\qvqa$, $\circfgood$ has to constitute a noticeable fraction and vice versa. 

\begin{lemma} $\circf$ admits $\qvqa$ iff. $\frac{\left|\circfgood\right|}{|\circf|} \ge 1/{\poly(\secpar)}$. 
    \label{lemma:qvqa=cgood}
\end{lemma}

\begin{proof}

First the ``if'' direction follows directly from our definition of $\qvqa$. 

The ``only if'' direction follows by a hybrid argument and an averaging argument. Suppose that $\circf$ is
  a $\qvqa$ family. Then for any efficient $\distrfam = \{\cdistr_C, \sampler_C\}_{C\in \circf}$, there is a
  quantum poly-time $\A$, such that
  \begin{align*}
    \E_{C\gets \circf} \left| \Pr[\A(C, \sampler_C,
    \sample_{C}) = 1] - \Pr[\A(C, \sampler_C, \sample_{\cdistr_C})
    = 1]\right| \ge \veps \, ,
  \end{align*}
  for some $\veps \ge \frac{1}{\poly(\secpar)}$ using
  $m=\poly(\secpar)$ samples. For any $i\in[m]$, we define
  \begin{align*}
    \sample(i) &: = (\sample_C^1, \ldots, \sample_{\cdistr_C}^i,
                 \sample_{\cdistr_C}^{i+1}, \ldots, \sample_{\cdistr_C}^m) \,, \\
    \sample(i+1) &: = (\sample_C^1, \ldots, \sample_C^i,
                   \sample_{\cdistr_C}^{i+1}, \ldots, \sample_{\cdistr_C}^m)  \, .
  \end{align*}

  Then by a hybrid argument, there must
  exist an $i^*$ such that
  \begin{align*}
    \E_{C\gets \circf} \left| \Pr[\A(C, \sampler,
    \sample(i^*))  = 1] - \Pr[\A(C, \sampler, \sample(i^*+1))
    = 1]\right| \ge \veps/m \, . 
  \end{align*}
  We then construct $\A'$ to distinguish $\rho_C$ and $\rho_{\cdistr_C}$
  efficiently. On an input state $\rho \in \{\rho_C,\rho_{D_C}\}$ and
  index $i^*$\footnote{We assume $i^*$ is given to $\A'$ as a
    non-uniform advice. Alternatively, $A'$ can randomly guess $i^*$
    with, which only reduces the success probability by an
    inverse-poly factor $1/m$.}, $\A'$ constructs a state $\sigma$, 
  \begin{align*}
    \sigma: = (\rho_C^1, \ldots, \rho_C^{i-1}, \rho, \rho_{\cdistr_C}^{i^*+1},\ldots,
    \rho_{\cdistr_C}^{m}) \, 
  \end{align*}  
  and measures it in the computational basis.
  Observe that the measurement outcome is $\sample(i^*+1)$ if $\rho = \rho_C$ and
  $\sample(i^*)$ if $\rho = \rho_{\cdistr_C}$. $\A'$ then runs $\A$
  on the measurement outcomes together with $C,\sampler_C$ and outputs what $A$ outputs. We can see that
  \begin{align*}
    & \E_{C\gets \circf}\left| \Pr[\A'(\rho_C) = 1] - \Pr[\A'(\rho_{\cdistr_C})  = 1]
      \right| \\
    = & \E_{C\gets \circf}\left| \Pr[\A(C, \sampler_C,
        \sample(i^*+1)) = 1] - \Pr[\A(C, \sampler_C, \sample(i^*))  = 1] \right| \\
    \ge & \veps/m  \, .
  \end{align*}
  This implies
  that for any $\distrfam = \{\cdistr_C, \sampler_C\}_{C\in \circf}$ there must exists a $1/\poly(\secpar)$ fraction of $C\in\circf$
  such that $\A'$ successfully tells apart $\rho_C$ and $\rho_{D_C}$ with inverse-poly
  probability. These pairs also have at least $1/\poly(\secpar)$ trace distance. 
    
\end{proof}

\paragraph{Other quantum verifiable quantum advantages.} We define $(s,t,\epsilon)$-$\seqvqa$ and $(s,t,\epsilon)$-$\seuqvqa$ following the definitions of $\sevqa$ and $\seuvqa$. Briefly, they are the same as $\qvqa$ and $\uqvqa$ except that the number of samples is restricted to $\poly(n)$. Then, we can show that efficient quantum algorithms for $\sampmcsp$ and $\obsampmcsp$ can lead to polynomial-time quantum verification of $\sevqa$ and $\seuvqa$ following proofs similar to \cref{thm:sevqa-meta} as following corollaries.

\begin{corollary}
If $\sampmcsp$ with $(s(\cdot), t(\cdot))$ can be solved in quantum polynomial time, then an $(s,t,\epsilon)$-$\seqvqa$ experiment is an $(s,\mathrm{poly}(n),\epsilon+\negl(n))$-$\qvqa$. 
\end{corollary}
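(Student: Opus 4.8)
The plan is to replay the proof of \cref{thm:sevqa-meta} with every classical algorithm upgraded to a quantum one; the corollary is essentially the quantum analogue, so I expect no genuinely hard step, only one subtlety about matching indistinguishability notions. First, suppose $\A$ is a quantum polynomial-time algorithm deciding $\sampmcsp$ with parameters $(s(\cdot), t(\cdot))$. A transcript of a $\seqvqa$ experiment consists of the circuit description $C$, the description $\sampler_D$ of a classical time-$s$ cheating sampler, and $q(n) = \poly(n)$ samples $z_1, \dots, z_{q(n)}$ drawn either from $D_C$ or from $D$. I would construct the quantum distinguisher $\A'$ that, on this input, sets $s := |\sampler_D|$ (the circuit size of the supplied sampler), runs $\A$ on $(z_1, \dots, z_{q(n)}, s, t)$, outputs $0$ if $\A$ accepts (i.e.\ some size-$\le s$ classical sampler produces a $t$-indistinguishable distribution) and $1$ otherwise. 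Since computing $|\sampler_D|$ is trivial and $\A$ runs in quantum poly-time, $\A'$ is quantum poly-time and consumes only $\poly(n)$ samples.

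Next I would verify correctness case by case. If $z_1, \dots, z_{q(n)}$ come from $D_C$ for uniformly random $C \gets \circf$, then the $\seqvqa$ hypothesis says no classical time-$s$ samplable distribution is $t$-indistinguishable from $D_C$ (on average over $C$, with advantage at least $\veps$), so the $\sampmcsp$ instance is a no-instance, $\A$ rejects, and $\A'$ outputs $1$ up to the negligible error of $\A$. If instead the samples come from $D$, then $\sampler_D$ is itself a size-$|\sampler_D|$ classical sampler producing exactly $D$, so the instance is a yes-instance, $\A$ accepts, and $\A'$ outputs $0$. Chaining the two cases and absorbing the $\negl(n)$ error of the $\sampmcsp$ solver gives distinguishing advantage $\veps + \negl(n)$; since $D$ ranged over all classical time-$s$ samplable distributions, the experiment is $(s, \poly(n), \veps + \negl(n))$-$\qvqa$.

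The hardest part — really the only subtle part — is aligning the indistinguishability notions: $\sampmcsp$ as written quantifies over $t(n)$-time \emph{classical} distinguishers, whereas $\seqvqa$ concerns $t$-time \emph{quantum} distinguishers. I would handle this by invoking the quantum-distinguisher variant of $\sampmcsp$, or equivalently by observing that the no-instance direction only needs the weaker fact that no classical time-$s$ sampler fools the specific quantum distinguisher provided by the $\seqvqa$ property, while the yes-instance direction is trivially robust to strengthening the distinguisher. The remaining bookkeeping — feeding the $\sampmcsp$ solver at least as many samples as the $\seqvqa$ distinguisher requires — is routine since both are $\poly(n)$.
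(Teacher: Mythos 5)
Your proposal is correct and follows essentially the same route as the paper, which proves this corollary only by reference to the proof of the classical case (\cref{thm:sevqa-meta}): upgrade the $\sampmcsp$ solver and the constructed distinguisher $\A'$ to quantum polynomial time and replay the two-case correctness argument. Your explicit flagging of the mismatch between the classical distinguishers in the definition of $\sampmcsp$ and the quantum distinguishers in $\seqvqa$ — and the observation that the no-instance direction requires a quantum-distinguisher variant of $\sampmcsp$ while the yes-instance direction is unaffected — is a genuine subtlety that the paper glosses over, and your resolution is the right one.
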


\begin{corollary}
 If $\obsampmcsp$ can be solved in quantum polynomial time, then an $(s,t,\epsilon)$-$\seuqvqa$ is also $(s,\mathrm{poly}(n),\epsilon+\negl(n))$-$\uqvqa$.     
\end{corollary}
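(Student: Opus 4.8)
The plan is to rerun the reduction proving~\cref{thm:sevqa-meta} (and the classical $\seuvqa$-to-$\uvqa$ corollary that follows it) essentially verbatim, with the only change that the meta-complexity solver---and hence the verifier distilled from it---is a quantum polynomial-time algorithm; this costs nothing, since $\uqvqa$ already permits quantum distinguishers. So fix a quantum polynomial-time algorithm $\A$ solving $\obsampmcsp$, and let $(s(\cdot),t(\cdot))$ be the parameter pair from the $(s,t,\epsilon)$-$\seuqvqa$ hypothesis.

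First I would build, for an arbitrary time-$s$ classically samplable ``cheating'' distribution $\cdistr$, a quantum time-$\poly(n)$ distinguisher $\A'$. On input $(C,\sample)$, where $\sample$ is a $\poly(n)$-sized sample set drawn either from $D_C$ or from $\cdistr$, $\A'$ calls $\A$ on the $\obsampmcsp$ instance assembled from $\sample$ together with the bounds $s,t$ (and, where a sampler description is expected, the circuit $C$, which $\A'$ holds). Note $\A'$ does not need $\sampler_\cdistr$: the bounds $s,t$ may be hardwired into $\A'$, which is permitted to depend on $\cdistr$; and the \emph{oblivious} variant $\obsampmcsp$ is exactly the one whose internal distinguisher ignores the description of the simulating sampler---matching $\uvqa$/$\uqvqa$, where $\A'$ receives no sampler description. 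If $\A$ answers ``yes'' (some size-$s$ sampler has output $t$-indistinguishable from the distribution that produced $\sample$), then $\A'$ outputs $0$ (``classical''); otherwise $\A'$ outputs $1$ (``quantum'').

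For correctness: when $\sample$ is drawn from $\cdistr$, the sampler of $\cdistr$ itself is a size-$s$ sampler whose output is trivially $t$-indistinguishable from $\cdistr$, so $\A$ answers ``yes'' and $\A'$ outputs $0$, whence $\Pr[\A'(C,\sample_{\cdistr})=1]=0$. When $\sample$ is drawn from $D_C$, the $\seuqvqa$ property implies that for all but a negligible fraction of $C\gets\circf$ no size-$s$ classical sampler has output $t$-indistinguishable from $D_C$, so the instance is a ``no'' instance, $\A$ answers ``no'', and $\A'$ outputs $1$ with probability $1-\negl(n)$ over $C$ and the samples. Hence
\begin{align*}
  \E_{C\gets\circf}\left|\Pr[\A'(C,\sample_{C})=1]-\Pr[\A'(C,\sample_{\cdistr})=1]\right|\ \geq\ 1-\negl(n)\ \geq\ \epsilon+\negl(n),
\end{align*}
and since $\A'$ runs in quantum polynomial time and this holds for every time-$s$ samplable $\cdistr$, $\circf$ is $(s,\poly(n),\epsilon+\negl(n))$-$\uqvqa$. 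The companion $\seqvqa$-to-$\qvqa$ corollary is proved the same way, using $\sampmcsp$ in place of $\obsampmcsp$ and forwarding $\sampler_\cdistr$---which the $\qvqa$ distinguisher does receive---to the non-oblivious solver.

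I expect the step needing the most care to be the correctness claim in the quantum-sample case: that $(s,t,\epsilon)$-$\seuqvqa$ forces the $\obsampmcsp$ instance built from samples of $D_C$ to be a ``no'' instance for all but a negligible fraction of $C$. This passes from an average-over-$C$ hypothesis to a per-$C$ conclusion, and it also asks that the ``indistinguishability'' inside the $\obsampmcsp$ instance be read against the same class of (quantum, time-$t$) distinguishers that $\seuqvqa$ quantifies over, so that any spoofing sampler $\A$ certifies would genuinely contradict $\seuqvqa$; one must likewise check that feeding $\A$ the ``wrong'' sampler description $C$ in the $\cdistr$-sample branch is harmless (the decision depends only on the samples and the bounds). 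These are handled exactly as in the proof of~\cref{thm:sevqa-meta}; the remainder is routine bookkeeping.
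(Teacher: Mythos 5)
Your proposal is correct and matches the paper's intent exactly: the paper proves this corollary only by pointing back to the argument of \cref{thm:sevqa-meta}, and your reduction is that argument rerun with a quantum $\obsampmcsp$ solver and with the sampler description dropped to match the oblivious/universal setting. The subtleties you flag at the end (average-over-$C$ to per-$C$, and which distinguisher class the $\obsampmcsp$ indistinguishability is measured against) are real but are inherited from, and handled no less carefully than, the paper's own proof of \cref{thm:sevqa-meta}.
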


 Finally, we also describe an analogue
of~\cref{def:designated_vqa} with a designated quantum verifier.

\begin{definition}[Designated quantum verifiable quantum advantages
  (($s,t,\veps$)-\dqvqa)]
\label{def:designated_qvqa}
Let $\secpar$ be the security parameter.  Consider an experiment $E$
specified by $(\setup, P)$ where
\begin{itemize}
\item $(\pp, \vk) \gets \setup(1^\secpar)$: a classical
  $\poly(\secpar)$-time algorithm that outputs a public parameter
  $\pp$ and a verification key $\vk$,
\item $z \gets P(\pp)$: a quantum $\poly(\secpar)$-time algorithm that
  outputs a transcript $z$ on input $\pp$.
    \end{itemize}
We denote a classical simulation algorithm of $E$ by $\simulate$.

We say $E$ is $(s, t,\veps)$- \emph{designated quantum verifiable
  quantum advantage (\dqvqa)}, if there exists some polynomial $q$
such that for \emph{all} time-$t$ classical simulator $\simulate$,
there exists a \emph{quantum} time-$s$ algorithms
$\A(\pp, \vk, P, \simulate, \sample) \in \bit$ such that
\begin{align*}
  \E_{(\pp,\vk) \gets \setup(1^\secpar)}\left| \Pr[\A(\pp,\vk, P, \simulate, \sample_{P}) = 1] - \Pr[\A(\pp, \vk, P, \simulate, \sample_{\simulate}) =
  1]\right| \geq \veps \, ,
\end{align*}
where $\sample_{P}$ is generated by running $P(\pp)$ $q(\secpar)$ times independently,
and $\sample_\simulate$ is generated by $\simulate$. 
\end{definition}

\subsection{Duality between $\efi$ and $\qvqa$}
\label{sec:efi=qvqa}

\begin{theorem}
  Let $\circf = \{C_k\}$ be a family of $n$-qubit poly-size quantum
  circuits. The following hold:
  \begin{enumerate}[label=(\alph*)]
  \item If for all but a $\negl(\secpar)$ fraction of $C\in \circf$,
  there exists a classically poly-time samplable distribution $D_C$ such
  that $\rho_C$ and $\rho_{\cdistr_C}$ form an $\efi$ pair. Then, $\circf$
  is not a $\qvqa$ family.
  \item If $\circf$ admits quantum advantage
  (i.e., for any $\distrfam=\{\cdistr_C,\sampler_C\}$,  $\E_{C \gets \circf}\norm{\rho_C - \rho_{\cdistr_C}}_1 \geq 1/{\poly(\secpar)}$),  but $\circf$ is not a $\qvqa$ family
  (\cref{def:qvqa}), then $\efi$ exists.
  \end{enumerate}
  \label{thm:efi-qvqa-duality}
\end{theorem}

\begin{proof}
    To show (a), suppose for contradiction that $\circf$ is a $\qvqa$ family. Then by the characterization in~\cref{lemma:qvqa=cgood}, $\frac{\left|\circfgood\right|}{|\circf|} \ge 1/{\poly(\secpar)}$. As a result, these $C \in \circfgood$ do not form a quantum-secure $\qefid$, i.e., an $\efi$, with any $\cdistr_C$, which shows a contradiction.

    To show (b), since $\circf$ is not a $\qvqa$ family, there
must exist a $\distrfam=\{\cdistr_C,\sampler_C\}$ such that for all
quantum poly-time algorithm $\A$,
\begin{align*}
  \E_{C\gets \circf}\left| \Pr[\A(C, \sampler_C, \sample_{C}) = 1] - \Pr[\A(C,
  \sampler_C, \sample_{\cdistr_C}) =
  1]\right|\le \negl(\secpar) \, , 
\end{align*}
where $\sampler_C$ is an efficient sampler for
$\cdistr_C$. Observe that $\sample_C$ is identical to multiple copies of
$\rho_C$. which is defined in \cref{def:cgood}, and $\sample_{\cdistr_C}$ is identical to multiple copies
of $\rho_{\cdistr_C}$. This means for all but a $\negl(\secpar)$ fraction of $C\in \circf$, $\rho_C$ and $\rho_{\cdistr_C}$ are indistinguishable. Then by the premise that $\circf$ admits quantum advantage,  i.e., for any $\distrfam=\{\cdistr_C,\sampler_C\}$,  $\E_{C \gets \circf}\norm{\rho_C - \rho_{\cdistr_C}}_1 \geq \veps$. This implies that there is at least a $1/\poly(\secpar)$ fraction of $C\in\circf$ such that $\norm{\rho_C - \rho_{\cdistr_C}}_1 \ge 1/{\poly(\secpar)}$. Hence, the set of $C\in \circf$ satisfying both also constitutes at least a $1/\poly(\secpar)$ fraction, and for these $C$, $\rho_C$ and $\rho_{\cdistr_C}$ form a quantum-secure $\qefid$, i.e., an $\efi$. 
\end{proof}

\bibliographystyle{alpha}
\bibliography{verify}

\end{document}